\newcommand{\comment}[1]{} 
\begin{document}
\newtheorem{theorem}{Theorem}
\newtheorem{proposition}{Proposition}
\newtheorem{lemma}{Lemma}
\newtheorem{definition}{Definition}
\newtheorem{corollary}{Corollary}
\newtheorem{example}{Example}
\newtheorem{remark}{Remark}
\newcommand{\ob}{\mbox{$\overline{\omega}$}}
\newcommand{\om}{\mbox{$\omega$}}
\newcommand{\Tr}{\mbox{Tr}}
\newcommand{\C}{\mbox{$\cal C$}}
\newcommand{\Cperb}{\mbox{$\C^\bot$}}
\newcommand{\ben}{\begin{equation*}}
\newcommand{\een}{\end{equation*}}

\title{A Projection Decoding of a Binary  \\
Extremal Self-Dual Code of Length $40$}
\author{Jon-Lark Kim \\ Department of Mathematics \\ Sogang University  \\ Seoul, 121-742, South Korea \\
{\tt jlkim@sogang.ac.kr } \\ \\
 Nari Lee\\
Department of Mathematics \\
Sogang University \\
Seoul 121-742, South Korea \\
{\tt narirhee@hanmail.net}}



\maketitle

\begin{abstract}
As far as we know, there is no decoding algorithm of any binary self-dual $[40, 20, 8]$ code except for the syndrome decoding applied to the code directly. This syndrome decoding for a binary self-dual $[40,20,8]$ code is not efficient in the sense that it cannot be done by hand due to a large syndrome table. The purpose of this paper is to give two new efficient decoding algorithms for an extremal binary doubly-even self-dual $[40,20, 8]$ code $C_{40,1}^{DE}$  by hand with the help of a Hermitian self-dual $[10,5,4]$ code $E_{10}$ over $GF(4)$. The main idea of this decoding is to project codewords of $C_{40,1}^{DE}$ onto $E_{10}$ so that it reduces the complexity of the decoding of $C_{40,1}^{DE}$. The first algorithm is called the representation decoding algorithm. It is based on the pattern of codewords of $E_{10}$. Using certain automorphisms of $E_{10}$, we show that only eight types of codewords of $E_{10}$ can produce all the codewords of $E_{10}$. The second algorithm is called the syndrome decoding algorithm based on $E_{10}$. It first solves the syndrome equation in $E_{10}$ and finds a corresponding binary codeword of $C_{40,1}^{DE}$.
\end{abstract}

{\bf{Key Words:}} decoding algorithm, projection, self-dual codes, syndrome decoding method

{\bf MSC 2010 Codes}: 94B05

\section{Introduction}
\label{sec:intro}

Self-dual codes have been one of the most active research topics in algebraic coding theory because they have wide connections with mathematical areas including groups, designs, lattices, modular forms, and invariant theory (see~\cite{HufPle},~\cite{JoyKim},~\cite{RS}).

Pless~\cite{PlessSO} started the classification of binary self-dual codes of lengths up to $20$.
There are many papers devoted for the classification of binary self-dual codes (see~\cite{ConSlo},~\cite{Huf2005}). Anguilar-Melchor, et al.~\cite{sd-38-ext} classified all extremal binary self-dual codes of length 38.
Shortly after, it was shown~\cite{BetHarMun2012} that
there are exactly 16470 extremal doubly-even self-dual binary codes
of length 40. Then, it has been shown~\cite{BouBouHar2012} that
there are exactly 10200655 extremal singly-even self-dual codes of length 40.

However, from a viewpoint of decoding, only a few self-dual codes have
efficient decoding algorithms. In particular, we are interested in decoding self-dual codes of reasonable length by hand, i.e., decoding without using computers. For example,  Pless~\cite{Ple_86} showed how to decode the extended binary Golay $[24,12,8]$ code by hand. The $[6,3,4]$ hexacode over $GF(4)$ was used here to decode the extended Golay $[24,12,8]$ code. Later, Gaborit-Kim-Pless~\cite{Kim} suggested two handy decoding algorithms for  three doubly-even self-dual $[32,16,8]$ codes such as
the binary Reed-Muller $[32, 16, 6]$ code and two other doubly-even self-dual $[32,16,8]$ codes denoted by $C83$ (or $2g_{16}$) and $C84$ (or $8f_4$) in the notation of \cite{PlessSO}. As far as we know, there is no efficient decoding algorithm by hand for binary self-dual $[40, 20, 8]$ codes. (The best known binary linear $[40,20]$ code has $d=9$ and has no efficient decoding yet.) For example, the usual syndrome decoding to decode $C_{40,1}^{DE}$  requires $2^{20}$ coset leaders. Hence the comparison of the syndrome of a received vector and the look-up table of $2^{20}$  coset leaders needs much space and time. It has been remarked as a research problem \cite[Research Problem 9.7.8]{HufPle} to find an efficient decoding algorithm that can be used on all self-dual codes or on a large family of such codes. Thus decoding self-dual codes by hand will be an efficient decoding algorithm.

In this paper, we describe two new efficient decoding algorithms for an extremal self-dual $[40,20, 8]$ code $C_{40,1}^{DE}$  by hand with the help of a Hermitian self-dual $[10,5,4]$ code $E_{10}$ over $GF(4)$.
 The algorithm is called the {\em representation decoding algorithm}. It is based on the pattern of codewords of $E_{10}$. Using the automorphisms of $E_{10}$ we show that certain eight types of codewords of $E_{10}$ produce all the codewords of $E_{10}$. The other algorithm is called the {\em syndrome decoding algorithm}. It first solves the syndrome equation in $E_{10}$ and finds a corresponding binary codeword of $C_{40,1}^{DE}$ .
 The main idea of this decoding is to generalize the decoding of the binary Golay code of length 24 whose codewords are arranged as a $4 \times 6$ matrix and are projected onto the $[6,3,4]$ Hexacode over $GF(4)$ (see~\cite{Ple_86}). We give a detailed explanation of these algorithms and show how these algorithms work by showing explicit examples.

\comment{
 We are concerned with extremal {\em singly-even}
self-dual binary codes of length 40.
These codes have minimum distance 8 and have the
following weight enumerator \cite{ConSlo, HarMun2006}.
\begin{equation}
 W(y)=1 +(125 + 16 \beta)y^8 +
         (1664 - 64 \beta)y^{10} + \cdots,
\label{eq:W}
\end{equation}
where $\beta=0,1,2, \cdots, 8, 10$.

All inequivalent singly-even self-dual $[40,20,8]$ binary codes
which have an automorphism of prime order $p \ge 5$ were
constructed in \cite{BuyYor}. There were 39 such codes.
These codes in \cite{BuyYor} have weight enumerators (\ref{eq:W})
for $\beta =$ 0, 1, 5, and 10.

In this paper we construct 55 inequivalent singly-even
self-dual $[40,20,8]$
binary codes which do not have any automorphism of prime order
$p \ge 5$, proving that these are not equivalent
to codes in \cite{BuyYor}.
The codes we construct have weight enumerators (\ref{eq:W})
for $\beta =$ 0, 2, 4, 6, 8, and 10.
Hence there are singly-even self-dual $[40,20,8]$ codes with
$\beta =$ 2, 4, 6, and 8.
Our construction is based on additive self-dual codes
of length 10 over $GF(4)$.
}  

\medskip

\section{Preliminaries}

A {\em linear $[n,k]$ code} over $\mathbb F_q$ is a $k$-dimensional subspace of $\mathbb F_q^n$. The {\em dual} of $\C$ is $\C^{\perp}=\{x \in \mathbb F_q^n~|~ x \cdot c = 0 {\mbox{ for any }} c \in \C\}$, where the dot product is either a usual inner product or a Hermitian inner product. A linear code $\C$ is called {\em self-dual} if $\C=\C^{\perp}$. If $q=2$, then $C$ is called {\em binary}. 
If $q=4$, let $GF(4)=\{0, 1, \om, \ob\}$, where $\ob=\om^2 = \om +1$. It is more natural to consider the Hermitian inner product $\left <, \right>$ on $GF(4)^n$:
for $x=(x_1, x_2, \dots, x_n)$ and $y=(y_1, y_2, \dots, y_n)$ in $GF(4)^n$, $\left <x,y \right>= \sum_{i=1}^n x_i \overline{y_i}$, where $\overline{a}= a^2$.

Besides linear codes over $GF(2)$ or $GF(4)$, we introduce additive codes.
For basic definitions about additive codes, we refer to \cite{All},~\cite{hohn},~\cite{Gab}.
An {\em additive code \C\ over $GF(4)$ of length $n$} is
an additive subgroup of $GF(4)^n$. Since \C\ is a vector
space over $GF(2)$, it has a $GF(2)$-basis consisting of $k~( 0 \le k \le 2n)$
vectors whose entries are in $GF(4)$. We call \C\ an $(n, 2^k)$ code.
A {\em generator matrix} of \C\ is a $k \times n$ matrix with
entries in $GF(4)$ whose rows are a $GF(2)$-basis of \C.\
The {\em weight} wt({\bf c}) of {\bf c} in \C\ is the number of
nonzero components of {\bf c}. The minimum weight $d$ of \C\
is the smallest weight of any nonzero codeword in \C.\ If \C\
is an $(n,2^k)$ additive code of minimum weight $d$, \C\ is
called an $(n,2^k,d)$ code.
In order to define an inner product on additive codes
we define the {\em trace} map,
i.e., for $x$ in $GF(4)$, $\Tr (x)=x+x^2 \in GF(2)$.
We now define a non-degenerate {\em trace inner product} of two vectors
${\bf x}=(x_1x_2\cdots x_n)$ and ${\bf y}=(y_1y_2\cdots y_n)$ in $GF(4)^n$
to be
\[{\bf x}\star{\bf y}=\sum_{i=1}^n\Tr(x_i\overline{y_i}) \in GF(2),\]
where $\overline{y_i}$ denotes the conjugate of $y_i$.
Note that $\Tr(x_i \overline{y_i})=1$ if and only if
$x_i$ and $y_i$ are nonzero distinct elements in $GF(4)$.

If \C\ is an additive code, its {\em dual}, denoted by \Cperb, is the
additive code $\{ {\bf x} \in GF(4)^n \mid {\bf x} \star {\bf c}=0
\mbox{ for all }{\bf c} \in \C\}$. If \C\ is an $(n,2^k)$ code,
then \Cperb\ is an $(n,2^{2n-k})$ code. As usual, \C\ is called
{\em self-dual} if $\C = \Cperb$. We note that if \C\ is self-dual,
\C\ is an $(n,2^n)$ code.  In binary codes there are two types of
self-dual codes which are {\em doubly-even} and {\em singly-even} such that
weights of all the codewords $\equiv 0 \pmod 4$ and  weights of some codewords $\equiv 2 \pmod 4$, respectively. So are in additive codes over $GF(4)$. A self-dual
additive code \C\ is called {\em Type II} (or {\em even}) if all codewords
have even weight, and {\em Type I} if some codeword has odd weight.
It is well known~\cite{CRSS} that a linear self-dual code over $GF(4)$ under
the Hermitian inner product (i.e., a Hermitian self-dual code over $GF(4)$)
is a Type II self-dual additive code over $GF(4)$.


\comment{
In Section~\ref{sec:const} we construct new 55 singly-even self-dual
$[40,20,8]$ binary codes from additive self-dual codes of length 10 over
$GF(4)$. In section~\ref{sec:decoding} we describe how to decode these
singly-even self-dual $[40,20,8]$ codes by using corresponding additive
codes.
}  


H\"{o}hn \cite{hohn} presents two constructions that begin with a self-dual
additive
$(n,2^n)$ code and produce a self-dual binary code of length $4n$.
We describe these two (Constructions A and B) and recall a third (C) from~\cite{Gab}.

If \C\ is a self-dual additive $(n,2^n)$ code over $GF(4)$, let $\widehat{\C}$ be
the binary linear $[4n,n]$ code obtained from \C\ by mapping each
$GF(4)$ component to a 4-tuple in $GF(2)^4$ as follows :
$0 \rightarrow 0000$, $1 \rightarrow 0011$, $\om \rightarrow 0101$,
and $\ob \rightarrow 0110$. Let $d_4$ be the $[4,1,4]$ binary linear
code $\{0000,1111\}$. Let $(d_4^n)_0$ be the $[4n,n-1,8]$ binary
linear code consisting of all codewords of weights divisible by
8 from the $[4n,n,4]$ code $d_4^n$, where $d_4^n$ denotes a direct sum of n copies of $d_4$.
Let  $e_B$ be the $[4n,1]$ code generated by
\[\left\{\begin{array}{rl}
1000\ 1000\ \cdots 1000\ 1000&\mbox{if }n\equiv 0\!\!\!\pmod 4\\
1000\ 1000\ \cdots 1000\ 0111&\mbox{if }n\equiv 2\!\!\!\pmod 4
\end{array}\right.\] and
$e_C$  the binary $[4n,1,n]$ code generated by
$1000\ 1000\ \cdots\ 1000$.

\medskip
\noindent {\em Construction A:} Define $\rho_A(\C)=\widehat{\C}+d_4^n$.

\medskip
\noindent {\em Construction B:}  Assume that $n$ is even.
 Define $\rho_B(\C)=\widehat{\C}+(d_4^n)_0+e_B$.

\medskip

{\noindent} {\em Construction C:} Assume that $n \equiv 2 \pmod4$.
 Define $\rho_C(\C)=\widehat{\C}+(d_4^n)_0+e_C$.

\medskip

We give a brief remark on the classification of additive self-dual  $(10,2^{10},4)$ codes.
Gaborit et al.~\cite{Gab} showed that there are at least $51$ Type I self-dual additive
$(10,2^{10},4)$ codes and at least $5$ Type II self-dual additive $(10,2^{10},4)$ codes.
Later, Bachoc-Gaborit~\cite{BacGab} showed that there are exactly $19$ Type II additive $(10,2^{10},4)$ codes. Finally, Danielsen-Parker~\cite{DanPar} classified that there are exactly $101$ Type I self-dual $(10, 2^{10}, 4)$ codes.

\begin{theorem}
\label{binmap}
Suppose \C\ is a Type II self-dual additive $(10,2^{10},4)$ code.
Then  $\rho_C(\C)$ is a singly-even self-dual $[40,20,8]$ binary code.
Similarly, $\rho_B(\C)$ is a doubly-even self-dual $[40,20,8]$ binary code.
\end{theorem}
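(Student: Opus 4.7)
The plan is to establish, for each of the two constructions, four things: (i) self-orthogonality, (ii) dimension exactly $20$, (iii) minimum weight $8$, and (iv) the claimed weight divisibility (doubly-even for $\rho_B$, singly-even for $\rho_C$). The foundational tool is the componentwise identity $\widehat{x}\cdot\widehat{y}\equiv\Tr(x\overline{y})\pmod 2$ for the binary map $0,1,\om,\ob\mapsto 0000,0011,0101,0110$, which I would verify by a direct check on the six unordered pairs in $GF(4)$. This identity immediately converts Hermitian self-duality of $\C$ into binary self-orthogonality of $\widehat{\C}$.

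For (i), I would verify the remaining inner products. Every block image has even binary weight, so $\widehat{\C}$ is orthogonal to $d_4^n$ and hence to $(d_4^n)_0 \subseteq d_4^n$; and $d_4^n$ is plainly self-orthogonal. The dot product of $e_C$ with any $\widehat{c}$ vanishes because each image has first bit $0$, while its dot product with $v \in (d_4^n)_0$ equals $|I| \pmod 2$, where $I$ indexes the $1111$-blocks of $v$; this is $0$ since $(d_4^n)_0$ forces $|I|$ even. The same argument handles $e_B$ (using that $0111 \cdot 1111 \equiv 1 \pmod 2$ as well), and both $e_B$ and $e_C$ are self-orthogonal because their weights $12$ and $10$ are even.

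For (ii), I would observe that every block of a codeword of $\widehat{\C} + (d_4^{10})_0$ has even binary weight, whereas every block of the generator of $e_B$ or $e_C$ has odd weight; hence neither $e_B$ nor $e_C$ lies in $\widehat{\C} + (d_4^{10})_0$. Similarly $\widehat{\C} \cap (d_4^{10})_0 = \{0\}$, since that intersection forces each block to lie in both $\{0000,1111\}$ and the image set $\{0000,0011,0101,0110\}$, which only share $0000$. With $\dim \widehat{\C} = 10$, $\dim (d_4^{10})_0 = 9$, and $\dim e_B = \dim e_C = 1$, the sum has dimension exactly $20$, and combined with (i) this forces self-duality.

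For (iii) and (iv), I would stratify each code by the two cosets of $\widehat{\C} + (d_4^{10})_0$. A codeword $\widehat{c} + \sum_{i \in I} d_4(i)$ in the first coset has weight $2\,\text{wt}(c) + 4\,|\{i \in I : c_i = 0\}|$, which is at least $8$ (using either $\text{wt}(c) \geq 4$ or $|I| \geq 2$) and divisible by $4$ (since $\text{wt}(c)$ is even for Type II). In the other coset every block has weight $1$ or $3$, giving total weight $10 + 2N_3$ where $N_3$ counts weight-$3$ blocks. A parity count on the symmetric difference $I \triangle \mathrm{supp}(c)$ shows $N_3$ is even for $\rho_C$, so those weights are $\equiv 2 \pmod 4$ and at least $10$, making $\rho_C(\C)$ singly-even with overall minimum $8$; and $N_3$ is odd for $\rho_B$ because the $0111$ in the last block of $e_B$ flips the parity condition, so those weights are $\geq 12$ and divisible by $4$, establishing the doubly-even property. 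The main obstacle is this last parity bookkeeping, where the single-block exception in $e_B$ must be combined with the even-weight constraints coming from Type II and from $(d_4^n)_0$.
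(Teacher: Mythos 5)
Your proof is correct, and it is genuinely more self-contained than the paper's. The paper proves only the minimum-distance claim itself: for self-duality, the dimension count, and the doubly-/singly-even dichotomy it simply cites Theorem 5.1 of Gaborit--Huffman--Kim--Pless, whereas you derive all of these directly from the trace identity $\widehat{x}\cdot\widehat{y}\equiv\Tr(x\overline{y})\pmod 2$ and the dimension/intersection bookkeeping. For the distance bound the two arguments share the same engine --- restricting weights to the ten blocks $d_i$ and observing that adding the glue vector $e_B$ or $e_C$ makes every block weight odd --- but your version is sharper: the paper only records that block weights are $0,2,4$ in one coset and $1,3$ in the other (and its inequality $\mathrm{wt}({\bf a}+{\bf b})\ge\mathrm{wt}({\bf a})\ge 8$ silently skips the case ${\bf a}=0$, ${\bf b}\ne 0$), while your explicit formulas $2\,\mathrm{wt}(c)+4\,|\{i\in I:c_i=0\}|$ and $10+2N_3$, together with the parity of $N_3=|I\triangle\mathrm{supp}(c)|$ (flipped at the tenth block for $e_B$), give the minimum weight and the mod-4 divisibility in one stroke. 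What your route buys is a proof readable without the external reference and a cleaner explanation of \emph{why} $\rho_B$ is doubly-even while $\rho_C$ is singly-even; what the paper's route buys is brevity. The only point worth making explicit in your write-up is that the minimum weight is exactly $8$ (not merely $\ge 8$) because $\C$ has a codeword of weight $4$, whose image in the first coset has weight $8$.
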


\begin{proof}

Note that $\rho_C(\C)=\widehat{\C}+(d_4^n)_0+e_C$ and that $\rho_B(\C)=\widehat{\C}+(d_4^n)_0+e_B$ since $n=10 \equiv 2 \pmod 4$.
It follows from Theorem 5.1 in \cite{Gab} that $\rho_C(\C)$ ($\rho_B(\C)$ respectively) is a singly-even
self-dual (doubly-even self-dual, respectively) binary $[40,20]$ code by construction. Thus  it remains to show the minimum distance
of $\rho_C(\C)$ ($\rho_B(\C)$ respectively) is equal to 8.

For the first statement, the minimum distance
of $\rho_C(\C)$ is less than or equal to 8 by construction.
Let ${\bf a} \in \widehat{\C}$, ${\bf b} \in (d_4^{10})_0$, and
$e_C={\bf c} = 1000\ 1000\ \cdots\ 1000$. It suffices to show that any linear combination of these vectors has weight at least 8.

To do this, for $i=1, \dots, 10$,  let $d_i = \{ 4i-3,4i-2,4i-1,4i \}$ be a subset of coordinates
$\{1,2,3,4, \cdots, 37, 38, 39, 40 \}$. Clearly wt({\bf a}),
wt({\bf b}), and wt({\bf c}) are greater than or equal to 8.
Since wt$({\bf a} +{\bf b}) \ge$ wt({\bf a}) $\ge 8$,
wt$({\bf a} +{\bf b}) \ge 8$.
The weight of ${\bf v} \in \rho_C(\C)$ restricted to
$d_i$ is denoted by wt$({\bf v}{\mid}_{d_i})$.
Since wt$(({\bf a}+{\bf c}) {\mid}_{d_i})=1 {\mbox{ or }}3
{\mbox{ for }} i=1,\cdots, 10$,
we have wt$({\bf a} +{\bf c}) \ge 8$.
Since wt$(({\bf b}+{\bf c}) {\mid}_{d_i})=1 {\mbox{ or }}3
{\mbox{ for }} i=1,\cdots, 10$,
we have wt$({\bf b} +{\bf c}) \ge 8$.
Finally, observe that wt$(({\bf a}+{\bf b}) {\mid}_{d_i}) = 0,
2,{\mbox{ or }}4$, hence that
wt$(({\bf a}+{\bf b} + {\bf c}) {\mid}_{d_i}) =
{\mbox{ wt}}(({\bf a}+{\bf b})+{\bf c}) {\mid}_{d_i})= 1 {\mbox{ or }}3$.
This implies that
wt$({\bf a}+{\bf b} + {\bf c}) \ge 8$.
Therefore the minimum distance of $\rho_C(\C)$ is 8, as desired.

For the second statement, let $e_B={\bf c'} = 1000\ 1000\ \cdots\ 0111$,  ${\bf a} \in \widehat{\C}$, and ${\bf b} \in (d_4^n)_0$.
By construction the minimum distance
of $\rho_B(\C)$ is less than or equal to 8. It suffices to show that any linear combination of these vectors has weight at least 8.

 Let $d_i$  be defined as above.
 Clearly wt({\bf a}),
wt({\bf b}), and $wt({\bf c'})$ are greater than or equal to 8.
Since wt$({\bf a} +{\bf b}) \ge$ wt({\bf a}) $\ge 8$,
wt$({\bf a} +{\bf b}) \ge 8$.
The weight of ${\bf v} \in \rho_B(\C)$ restricted to
$d_i$ is denoted by wt$({\bf v}{\mid}_{d_i})$.
Since wt$(({\bf a}+{\bf c'}) {\mid}_{d_i})=1 {\mbox{ or }}3
{\mbox{ for }} i=1,\cdots, 10$,
we have wt$({\bf a} +{\bf c'}) \ge 8$.
Since wt$(({\bf b}+{\bf c'}) {\mid}_{d_i})=1 {\mbox{ or }}3
{\mbox{ for }} i=1,\cdots 10$,
we have wt$({\bf b} +{\bf c'}) \ge 8$.
Finally, wt$(({\bf a}+{\bf b}) {\mid}_{d_i}) = 0,
2,{\mbox{ or }}4$, hence that
wt$(({\bf a}+{\bf b} + {\bf c'}) {\mid}_{d_i}) =
{\mbox{ wt}}(({\bf a}+{\bf b})+{\bf c'}) {\mid}_{d_i})= 1 {\mbox{ or }}3$.
This implies that
wt$({\bf a}+{\bf b} + {\bf c'}) \ge 8$.
Therefore the minimum distance of $\rho_B(\C)$ is 8, as desired.
\end{proof}





\comment{
 Each has automorphisms
whose order is divisible only by 2 or 3 except one code.
The 56 self-dual $(10,2^{10},4)$ codes
were obtained by lengthening self-dual additive
$(9,2^{9},4)$ codes. We list generator matrices for
the exact eight $(9,2^{9},4)$ codes in the appendix.
Generator matrices for self-dual $(10,2^{10},4)$ codes
are constructed by adding one row {\bf x} and one column ${\bf y}$ to
the $(9,2^{9},4)$ codes. That is, let $G$ be a generator
matrix for a self-dual additive $(9,2^{9},4)$ code. Then
\begin{equation}
G_i=\left[\begin{array}{c|c}
&\\
G&{\bf y}\\
& \\ \hline
{\bf x}& \om
\end{array}\right]\label{eq:len1}
\end{equation}
is a generator matrix for a self-dual $(10,2^{10},4)$ code.
We list 55 self-dual codes by this representation.
} 

Now, one natural question is whether it is possible to decode some (or all) of these singly-even or doubly-even self-dual $[40,20,8]$ binary codes.

\section{Construction of a doubly-even (or singly-even) $[40,20,8]$ code}
\label{sec:construction}

In this section, we describe
how to construct a singly-even or doubly-even self-dual $[40,20,8]$ binary code which can be decoded easily.

As mentioned above, there are exactly $19$ Type II self-dual additive $(10,2^{10},4)$ codes. Hence by
applying Theorem~\ref{binmap}, we can get  $19$ singly-even self-dual $[40,20,8]$ binary codes and $19$ doubly-even self-dual $[40,20,8]$ binary codes, all of which are inequivalent by Magma. Not all such binary codes can be decoded efficiently.
Since Hermitian self-dual codes over $GF(4)$ are Type II self-dual additive codes, we use the classification of Hermitian self-dual linear $[10,5,4]$ codes. In fact,
there are exactly two non-equivalent Hermitian self-dual linear $[10,5,4]$ codes, denoted by $E_{10}$ and $B_{10}$ in the notation of~\cite{MacOdlSlo}. Their weight enumerator is $W_{10}(y)= 1+ 30y^4 + 300y^6 + 585y^8 + 108 y^{10}$ by \cite{MacOdlSlo}. We rewrite each generator matrix as a $GF(2)$-basis,  where the first five rows form a basis for a Hermitian self-dual linear code over $GF(4)$.

{\tiny
\[ G(E_{10}) = \left [
\begin{array}{cccccccccc}
1 &1 &1 &1 &0 &0 &0 &0 &0 &0 \\
0 &0 &1 &1 &1 &1 &0 &0 &0 &0 \\
0 &0 &0 &0 &1 &1 &1 &1 &0 &0 \\
0 &0 &0 &0 &0 &0 &1 &1 &1 &1 \\
1 &0 &1 &0 &1 &0 &1 &0 &\om &\ob \\
\hline
\om &\om &\om &\om &0 &0 &0 &0 &0 &0 \\
0 &0 &\om &\om &\om &\om &0 &0 &0 &0 \\
0 &0 &0 &0 &\om &\om &\om &\om &0 &0 \\
0 &0 &0 &0 &0 &0 &\om &\om &\om &\om \\
\om &0 &\om &0 &\om &0 &\om &0 &\ob &1\\
\end{array}
\right],~
 G(B_{10}) = \left [ \begin{array}{cccccccccc}
1& 1& 1& 1& 0& 0& 0& 0& 0& 0\\
0& 1& \om& \ob& 1& 0& 0& 0& 0& 0\\
0& 0& 0& 0& 0& 1& 1& 1& 1& 0\\
0& 0& 0& 0& 0& 0& 1& \om& \ob& 1\\
0& 1& \ob& \om& 0& 0& 1& \ob& \om& 0\\
\hline
\om& \om& \om& \om& 0& 0& 0& 0& 0& 0\\
0& \om& \ob& 1& \om& 0& 0& 0& 0& 0\\
0& 0& 0& 0& 0& \om& \om& \om& \om& 0\\
0& 0& 0& 0& 0& 0& \om& \ob& 1& \om\\
0& \om& 1& \ob& 0& 0& \om& 1& \ob& 0\\
\end{array}
\right]
\]
}

Applying Theorem~\ref{binmap}, we obtain the following theorem.

\begin{theorem}
\begin{enumerate}

\item
Each of generator matrices for $\rho_C (E_{10})$  and  $\rho_C(B_{10})$ generates a singly-even self-dual $[40,20,8]$ code.

\item Each of generator matrices for $\rho_B (E_{10})$ and $\rho_B(B_{10})$ generates a doubly-even self-dual $[40,20,8]$ code.
\end{enumerate}

\end{theorem}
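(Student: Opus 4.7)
The plan is to reduce both parts of the theorem to direct applications of Theorem \ref{binmap}. The only substantive work is verifying that $E_{10}$ and $B_{10}$ meet the hypothesis of that theorem; once that is confirmed, the conclusions drop out immediately.

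First I would check that the ten-row matrices $G(E_{10})$ and $G(B_{10})$ displayed above are $GF(2)$-bases of Type II self-dual additive $(10, 2^{10}, 4)$ codes over $GF(4)$. In each displayed matrix the first five rows span a linear code over $GF(4)$; these are precisely the two Hermitian self-dual $[10,5,4]$ codes of \cite{MacOdlSlo}, both with weight enumerator $W_{10}(y) = 1 + 30y^4 + 300y^6 + 585y^8 + 108y^{10}$, and I would simply cite that classification (together with a direct check that the rows are Hermitian-orthogonal) rather than recompute duality. The bottom five rows in each displayed matrix are the $\omega$-multiples of the top five, so the full ten rows are a $GF(2)$-basis for exactly the same $GF(4)$-linear code viewed as an additive code of order $2^{10}$. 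By the classical result from \cite{CRSS} quoted in Section~2, every Hermitian self-dual $GF(4)$-linear code is automatically a Type II self-dual additive code over $GF(4)$; hence both $E_{10}$ and $B_{10}$ qualify as Type II self-dual additive $(10, 2^{10}, 4)$ codes.

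With this identification in hand, both parts of the theorem follow from Theorem \ref{binmap} applied to $\mathcal{C} = E_{10}$ and to $\mathcal{C} = B_{10}$. Since $n = 10 \equiv 2 \pmod 4$, both constructions $\rho_B$ and $\rho_C$ are defined on these codes. Part (1) is the $\rho_C$ conclusion of Theorem \ref{binmap}: $\rho_C(E_{10})$ and $\rho_C(B_{10})$ are singly-even self-dual $[40, 20, 8]$ binary codes. Part (2) is the $\rho_B$ conclusion: $\rho_B(E_{10})$ and $\rho_B(B_{10})$ are doubly-even self-dual $[40, 20, 8]$ binary codes.

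The main obstacle is essentially notational: keeping straight that the displayed matrices are ten-row $GF(2)$-bases (not five-row $GF(4)$-bases), and that $\rho_B, \rho_C$ are applied to the additive code generated by all ten rows. No fresh weight-enumerator computation or minimum-distance argument is needed beyond the one already carried out in the proof of Theorem \ref{binmap}; the present theorem is purely a specialization of that general result to two concrete codes from \cite{MacOdlSlo}.
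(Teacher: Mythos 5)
Your proposal is correct and matches the paper's approach exactly: the paper presents this theorem as an immediate application of Theorem~\ref{binmap} to $E_{10}$ and $B_{10}$, using the fact (cited from \cite{CRSS} in Section~2) that Hermitian self-dual linear codes over $GF(4)$ are Type II self-dual additive codes. Your write-up is actually somewhat more explicit than the paper, which omits the hypothesis-verification step entirely.
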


Let ${\C}_{40,1}^{DE}=\rho_B (E_{10})$, ${\C}_{40,2}^{DE}=\rho_B (B_{10})$, and
 ${\C}_{40,1}^{SE}=\rho_C (E_{10})$, ${\C}_{40,2}^{SE}=\rho_C (B_{10})$.
Then we describe some properties of ${\C}_{40,i}^{DE}$ and ${\C}_{40,i}^{SE}$ for $i=1,2$. The following computation has been done by Magma.

The weight distribution of ${\C}_{40,1}^{DE}$ is $A_0= 1, A_8=285=A_{32}, A_{12}= 21280=A_{28}, A_{16}=239970=A_{24}, A_{20}=525504$. The order of the automorphism group of ${\C}_{40,1}^{DE}$ is $2^{18} \times 3^2 \times 5 \times 7$. Its covering radius is $8$. The generator matrix $G({\C}_{40,1}^{DE})$ of ${\C}_{40,1}^{DE}$ is given below.

The weight distribution of ${\C}_{40,1}^{SE}$ is $A_0= 1, A_8=285=A_{32}, A_{10}=1024=A_{30}, A_{12}=11040=A_{28}, A_{14}=46080=A_{26}, A_{16}=117090=A_{24}, A_{18}=215040=A_{22}, A_{20}=267456$. We recall that any singly-even self-dual $[40,20,8]$ code has the
following weight enumerator \cite{ConSlo, HarMun2006}.
\begin{equation*}
 W(y)=1 +(125 + 16 \beta)y^8 +
         (1664 - 64 \beta)y^{10} + \cdots,
\label{eq:W}
\end{equation*}
where $\beta=0,1,2, \cdots, 8, 10$. It is useful to know the weight enumerator of ${\C}_{40,2}^{SE}$ in terms of $\beta$. Hence ${\C}_{40,1}^{SE}$ has $W(y)$ with $\beta=10$.
 The order of the automorphism group of ${\C}_{40,1}^{SE}$ is $2^{18} \times 3^2 \times 5$. Its covering radius is $7$. The generator matrix of ${\C}_{40,1}^{SE}$ is obtained by replacing the last row of
 $G({\C}_{40,1}^{DE})$ by the vector $e_C$ from Construction $C$.

Similarly, the weight distribution of ${\C}_{40,2}^{DE}$ is the same as that of $C_{40,1}^{DE}$. The order of the automorphism group of ${\C}_{40,2}^{DE}$ is $2^{16} \times 3^3 \times 5^2$.
Its covering radius is $7$. The generator matrix $G({\C}_{40,2}^{DE})$ of ${\C}_{40,2}^{DE}$ is given below.

The weight distribution of ${\C}_{40,2}^{SE}$ is the same as that of ${\C}_{40,1}^{SE}$.
 The order of the automorphism group of ${\C}_{40,2}^{SE}$ is $2^{16} \times 3^3 \times 5^2$.
 Its covering radius is $7$. The generator matrix of ${\C}_{40,i}^{SE}$ is obtained by replacing the last row of $G({\C}_{40,i}^{DE})$ by the vector $e_C$ in Construction $C$.

{\tiny
\[
G({\C}_{40,1}^{DE})=
\left[
\begin{array}{c}
0 0 1 1 0 0 1 1 0 0 1 1 0 0 1 1 0 0 0 0 0 0 0 0 0 0 0 0 0 0 0 0 0 0 0 0 0 0 0 0\\
0 0 0 0 0 0 0 0 0 0 1 1 0 0 1 1 0 0 1 1 0 0 1 1 0 0 0 0 0 0 0 0 0 0 0 0 0 0 0 0\\
0 0 0 0 0 0 0 0 0 0 0 0 0 0 0 0 0 0 1 1 0 0 1 1 0 0 1 1 0 0 1 1 0 0 0 0 0 0 0 0\\
0 0 0 0 0 0 0 0 0 0 0 0 0 0 0 0 0 0 0 0 0 0 0 0 0 0 1 1 0 0 1 1 0 0 1 1 0 0 1 1\\
0 0 1 1 0 0 0 0 0 0 1 1 0 0 0 0 0 0 1 1 0 0 0 0 0 0 1 1 0 0 0 0 0 1 0 1 0 1 1 0\\
0 1 0 1 0 1 0 1 0 1 0 1 0 1 0 1 0 0 0 0 0 0 0 0 0 0 0 0 0 0 0 0 0 0 0 0 0 0 0 0\\
0 0 0 0 0 0 0 0 0 1 0 1 0 1 0 1 0 1 0 1 0 1 0 1 0 0 0 0 0 0 0 0 0 0 0 0 0 0 0 0\\
0 0 0 0 0 0 0 0 0 0 0 0 0 0 0 0 0 1 0 1 0 1 0 1 0 1 0 1 0 1 0 1 0 0 0 0 0 0 0 0\\
0 0 0 0 0 0 0 0 0 0 0 0 0 0 0 0 0 0 0 0 0 0 0 0 0 1 0 1 0 1 0 1 0 1 0 1 0 1 0 1\\
0 1 0 1 0 0 0 0 0 1 0 1 0 0 0 0 0 1 0 1 0 0 0 0 0 1 0 1 0 0 0 0 0 1 1 0 0 0 1 1\\
1 1 1 1 1 1 1 1 0 0 0 0 0 0 0 0 0 0 0 0 0 0 0 0 0 0 0 0 0 0 0 0 0 0 0 0 0 0 0 0\\
1 1 1 1 0 0 0 0 1 1 1 1 0 0 0 0 0 0 0 0 0 0 0 0 0 0 0 0 0 0 0 0 0 0 0 0 0 0 0 0\\
1 1 1 1 0 0 0 0 0 0 0 0 1 1 1 1 0 0 0 0 0 0 0 0 0 0 0 0 0 0 0 0 0 0 0 0 0 0 0 0\\
1 1 1 1 0 0 0 0 0 0 0 0 0 0 0 0 1 1 1 1 0 0 0 0 0 0 0 0 0 0 0 0 0 0 0 0 0 0 0 0\\
1 1 1 1 0 0 0 0 0 0 0 0 0 0 0 0 0 0 0 0 1 1 1 1 0 0 0 0 0 0 0 0 0 0 0 0 0 0 0 0\\
1 1 1 1 0 0 0 0 0 0 0 0 0 0 0 0 0 0 0 0 0 0 0 0 1 1 1 1 0 0 0 0 0 0 0 0 0 0 0 0\\
1 1 1 1 0 0 0 0 0 0 0 0 0 0 0 0 0 0 0 0 0 0 0 0 0 0 0 0 1 1 1 1 0 0 0 0 0 0 0 0\\
1 1 1 1 0 0 0 0 0 0 0 0 0 0 0 0 0 0 0 0 0 0 0 0 0 0 0 0 0 0 0 0 1 1 1 1 0 0 0 0\\
1 1 1 1 0 0 0 0 0 0 0 0 0 0 0 0 0 0 0 0 0 0 0 0 0 0 0 0 0 0 0 0 0 0 0 0 1 1 1 1\\
1 0 0 0 1 0 0 0 1 0 0 0 1 0 0 0 1 0 0 0 1 0 0 0 1 0 0 0 1 0 0 0 1 0 0 0 0 1 1 1\\
\end{array}
\right],
~~~
G({\C}_{40,2}^{DE})=
\left[
\begin{array}{c}
0 0 1 1 0 0 1 1 0 0 1 1 0 0 1 1 0 0 0 0 0 0 0 0 0 0 0 0 0 0 0 0 0 0 0 0 0 0 0 0\\
0 0 0 0 0 0 1 1 0 1 0 1 0 1 1 0 0 0 1 1 0 0 0 0 0 0 0 0 0 0 0 0 0 0 0 0 0 0 0 0\\
0 0 0 0 0 0 0 0 0 0 0 0 0 0 0 0 0 0 0 0 0 0 1 1 0 0 1 1 0 0 1 1 0 0 1 1 0 0 0 0\\
0 0 0 0 0 0 0 0 0 0 0 0 0 0 0 0 0 0 0 0 0 0 0 0 0 0 1 1 0 1 0 1 0 1 1 0 0 0 1 1\\
0 0 0 0 0 0 1 1 0 1 1 0 0 1 0 1 0 0 0 0 0 0 0 0 0 0 1 1 0 1 1 0 0 1 0 1 0 0 0 0\\
0 1 0 1 0 1 0 1 0 1 0 1 0 1 0 1 0 0 0 0 0 0 0 0 0 0 0 0 0 0 0 0 0 0 0 0 0 0 0 0\\
0 0 0 0 0 1 0 1 0 1 1 0 0 0 1 1 0 1 0 1 0 0 0 0 0 0 0 0 0 0 0 0 0 0 0 0 0 0 0 0\\
0 0 0 0 0 0 0 0 0 0 0 0 0 0 0 0 0 0 0 0 0 1 0 1 0 1 0 1 0 1 0 1 0 1 0 1 0 0 0 0\\
0 0 0 0 0 0 0 0 0 0 0 0 0 0 0 0 0 0 0 0 0 0 0 0 0 1 0 1 0 1 1 0 0 0 1 1 0 1 0 1\\
0 0 0 0 0 1 0 1 0 0 1 1 0 1 1 0 0 0 0 0 0 0 0 0 0 1 0 1 0 0 1 1 0 1 1 0 0 0 0 0\\
1 1 1 1 1 1 1 1 0 0 0 0 0 0 0 0 0 0 0 0 0 0 0 0 0 0 0 0 0 0 0 0 0 0 0 0 0 0 0 0\\
1 1 1 1 0 0 0 0 1 1 1 1 0 0 0 0 0 0 0 0 0 0 0 0 0 0 0 0 0 0 0 0 0 0 0 0 0 0 0 0\\
1 1 1 1 0 0 0 0 0 0 0 0 1 1 1 1 0 0 0 0 0 0 0 0 0 0 0 0 0 0 0 0 0 0 0 0 0 0 0 0\\
1 1 1 1 0 0 0 0 0 0 0 0 0 0 0 0 1 1 1 1 0 0 0 0 0 0 0 0 0 0 0 0 0 0 0 0 0 0 0 0\\
1 1 1 1 0 0 0 0 0 0 0 0 0 0 0 0 0 0 0 0 1 1 1 1 0 0 0 0 0 0 0 0 0 0 0 0 0 0 0 0\\
1 1 1 1 0 0 0 0 0 0 0 0 0 0 0 0 0 0 0 0 0 0 0 0 1 1 1 1 0 0 0 0 0 0 0 0 0 0 0 0\\
1 1 1 1 0 0 0 0 0 0 0 0 0 0 0 0 0 0 0 0 0 0 0 0 0 0 0 0 1 1 1 1 0 0 0 0 0 0 0 0\\
1 1 1 1 0 0 0 0 0 0 0 0 0 0 0 0 0 0 0 0 0 0 0 0 0 0 0 0 0 0 0 0 1 1 1 1 0 0 0 0\\
1 1 1 1 0 0 0 0 0 0 0 0 0 0 0 0 0 0 0 0 0 0 0 0 0 0 0 0 0 0 0 0 0 0 0 0 1 1 1 1\\
1 0 0 0 1 0 0 0 1 0 0 0 1 0 0 0 1 0 0 0 1 0 0 0 1 0 0 0 1 0 0 0 1 0 0 0 0 1 1 1\\
\end{array}
\right]
\]
}



In what follows, we describe how the four codes ${\C}_{40,i}^{DE}$ and ${\C}_{40,i}^{SE}$ for $i=1,2$ are obtained in terms of column and row parities.

To do this, we recall basic terms from \cite{Kim} and \cite{KimK}.

Let ${\bf v}$ be a binary
vector of length 40. We identify it with a $4 \times 10$ array
with zeros and ones in it. Let {\footnotesize$\left[\begin{array}{c}
                          v_1\\
                          v_2 \\
                          v_3\\
                          v_4 \\
                   \end{array}\right]$ } be any column of the array of ${\bf v}$  (where $v_i=0$ or $1$ for $i=1, \ldots,4$).

Label the four rows of the array with the elements of $GF(4): 0, 1, \om, \ob$.      If we take the inner product of a column of our array with the row
labels, we get $0 \cdot v_1 + 1\cdot v_2+\om\cdot v_3+\ob\cdot v_4$ which is an element of $GF(4)$. This defines a linear map called $Proj$ from the set of
binary vectors of length $4m$ to the set of quaternary vectors of length $m$.

For instance, let
${\bf v}$=
(1,1,1,0, 1,0,0,0, 1,1,0,0, 0,1,0,1, 1,0,0,1,
 1,1,0,0, 0,0,1,0, 0,1,0,0, 1,1,1,1, 0,1,1,0)
be the binary vector of length 40. Then

\[ {\bf v}= \begin{array}{cccccccccccc}
 & &1  &2  &3  &4  &5  &6  &7  &8  &9  &10\\ \hline
 0 &   &1  &1  &1  &0  &1  &1  &0  &0  &1  &0 \\
 1 &   &1  &0  &1  &1  &0  &1  &0  &1  &1  &1 \\
\om&   &1  &0  &0  &0  &0  &0  &1  &0  &1  &1 \\
\ob&   &0  &0  &0  &1  &1  &0  &0  &0  &1  &0 \\
 \hline
 & &\ob&0  &1  &\om&\ob&1  &\om&1  &0  &\ob
\end{array}
\]

is projected to the quaternary vector ${\bf w} =(\ob,0,1,\om,\ob,1,\om,1,0,\ob)$
of length 10. Hence we have $Proj({\bf v}) = {\bf w}$.

Let the {\em parity of a column} be either even or odd if an even or
an odd number of ones exist in the column. Define the {\em parity
of the top row} in a similar fashion. Thus columns 1,2,7, and 8
of the above array have odd parity, and the rest have even parity.
The top row has even parity.

Let $S$ be a set of binary vectors of length $4m$ and $\C_4$ a quaternary additive code of length $m$. Then $S$ is said to have $projection$ $O$ $onto$ $\C_4$ if the following conditions are satisfied:
\begin{itemize}
\item[(i)] For any vector ${\bf v}\in S$, $Proj({\bf v})\in \C_4$. Conversely, for any vector ${\bf w}\in \C_4$, all vectors ${\bf v}$ such that
$Proj({\bf v})={\bf w}$ are in $S$.

\item[(ii)] The columns of the array of any vector of $S$ are either all even or all odd.

\item[(iii)] The parity of the top row of the array of any vector of $S$ is the same as the column parity of the array.

\end{itemize}

Using the same notation as above, $S$  is said to have $projection$ $E$ $onto$ $\C_4$  if the conditions (i) and (ii), as well as the following third condition (iii$'$), are satisfied:
\begin{itemize}

\item[(iii$'$)] The parity of the top row of the array of any vector of $S$ is always even.

\end{itemize}

Now we have a result similar to \cite[Lemma 2]{Kim}.

In an analogous way, we obtain the following theorem.

\begin{theorem}
\label{correspondence_de}
Let ${\C}_1=E_{10}$ and ${\C}_2 =B_{10}$.
Let $i=1,2$. Then ${\C}_{40,i}^{DE}$  has projection $O$ onto $\C_i$.
\end{theorem}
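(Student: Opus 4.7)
The plan is to verify the three projection-$O$ conditions directly on each summand of $\C_{40,i}^{DE}=\widehat{\C_i}+(d_4^{10})_0+e_B$ and then close the converse part of~(i) by a cardinality count.

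First, for condition~(i), I would check $Proj(\widehat{c})=c$ componentwise using the table $0\mapsto 0000$, $1\mapsto 0011$, $\om\mapsto 0101$, $\ob\mapsto 0110$ together with the formula $Proj(v_1,v_2,v_3,v_4)=v_2+\om v_3+\ob v_4$. I would then observe $Proj(b)=0$ for every $b\in(d_4^{10})_0$, since each column of $b$ is $0000$ or $1111$ and both map to $0\in GF(4)$; and $Proj(e_B)=0$, since the first nine columns $1000$ project to $0$ and the final column $0111$ projects to $1+\om+\ob=0$. The $GF(2)$-linearity of $Proj$ then yields $Proj(v)\in\C_i$ for every $v\in\C_{40,i}^{DE}$.

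Next, I would verify conditions~(ii) and~(iii) on each summand. Each column of $\widehat{c}$ has weight $0$ or $2$ (even) and the top row of $\widehat{c}$ is identically zero. Each column of $b\in(d_4^{10})_0$ is $0000$ or $1111$ (even), and because every codeword of $(d_4^{10})_0$ has weight divisible by $8$, the number of $1111$-columns in $b$ is even, so the top-row parity of $b$ is even. Every column of $e_B$ has weight $1$ (odd), and its top row $111111111\,0$ has weight $9$ (odd). Hence for $v=\widehat{c}+b+\epsilon e_B$ with $\epsilon\in\{0,1\}$, all columns of $v$ have parity $\epsilon$ and the top row of $v$ has parity $\epsilon$, which is exactly~(ii) and~(iii).

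For the converse part of~(i), I would count the set $S$ of $v\in GF(2)^{40}$ satisfying (i), (ii), (iii). Writing $GF(4)$ in the basis $\{1,\om\}$, the equation $v_2+\om v_3+\ob v_4=a+b\om$ reduces to $v_2+v_4=a$ and $v_3+v_4=b$, so each target $w_j\in GF(4)$ admits exactly four column preimages in $GF(2)^4$, two of each parity. With $w\in\C_i$ fixed and a common column parity chosen, each column has two lifts indexed by the free top bit, and imposing the top-row parity constraint removes half, giving $2\cdot 2^9=2^{10}$ lifts per $w$. Thus $|S|=|\C_i|\cdot 2^{10}=2^{20}=|\C_{40,i}^{DE}|$, and the containment obtained above forces $S=\C_{40,i}^{DE}$. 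The main obstacle is the parity book-keeping, especially verifying that the weight-divisibility-by-$8$ condition defining $(d_4^{10})_0$ is precisely what is needed to force even top-row parity of $b$; the rest is a routine enumeration.
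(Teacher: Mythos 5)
Your proof is correct, and its overall strategy coincides with the paper's: establish that $\C_{40,i}^{DE}$ is contained in the set $S$ of all vectors satisfying the projection-$O$ conditions, then show $|S|=2^{20}=|\C_{40,i}^{DE}|$ to force equality. The details differ in two places, both to your advantage. First, the paper merely asserts that $\rho_B(\C_i)$ "satisfies properties (i), (ii), and (iii)," whereas you actually verify this summand by summand: $Proj(\widehat{c})=c$, $Proj$ kills $(d_4^{10})_0$ and $e_B$, the columns and top row of $\widehat{c}$ and of $b\in(d_4^{10})_0$ are even (the divisibility of $\mbox{wt}(b)$ by $8$ giving an even number of $1111$-columns is exactly the point), and $e_B$ has all columns odd with odd top row $1111111110$. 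Second, the paper counts $S$ via an information set of five columns of $G(\C_i)$, getting $8^5\times 2^4\times 1$ lifts per parity class, while you count $2^9$ lifts per codeword $w\in\C_i$ per parity class and multiply by $|\C_i|=2^{10}$; these are the same enumeration organized differently, and your version avoids having to exhibit an information set. Both arguments rest on the same reading of condition (i) — that the "converse" clause is to be interpreted within the constraints (ii) and (iii), i.e., that $S$ is the largest set satisfying all three conditions — so your proof matches the paper's intent exactly.
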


\begin{proof}
Let $i=1,2$. Let $\mathcal{I}$ be a set of binary vectors of length 40 having projection $O$ onto $\C_i$. Since we denoted ${\C}_{40,1}^{DE}=\rho_B (E_{10})$, ${\C}_{40,2}^{DE}=\rho_B (B_{10})$ before, we want to show that $\rho_B(\C_i)=\mathcal{I}$.

Note that  $Proj({\bf v})$ in condition (i) of projection $O$ is linear, i.e., if $Proj({\bf v_i})={\bf c_i}$ for  $i=1,2$, then  $Proj({\bf v_1}+{\bf v_2})={\bf c_1}+{\bf c_2}$. Thus any set satisfying condition (i) is linear.  It is easy to see that sum of any two vectors of even(odd, respectively) parities for columns and top row also has even(even, respectively)  parities for columns and top row. Sum of a vector of even parity and a vector of odd parity has odd parities for columns and top row. Thus any set satisfying the condition (ii) and (iii) of projection $O$ is also linear.  Since $\mathcal{I}$  is assummed to have projection $O$ onto $\C_i$,  $\mathcal{I}$ satisfies conditions (i), (ii), and (iii) by the definition of  $projection$ $O$ $onto$ $\C_4$. Thus  $\mathcal{I}$ is a binary linear code.
We note that $\C_i$ as a linear code over $GF(4)$ has an information set which consists of some five linearly independent columns over $GF(4)$ of
$G(\C_i)$. (One such information set is the set of columns  1, 2, 4, 5, and 7 of $G(\C_1)$ (resp. 1, 5, 6, 7, and 10 of $G(\C_2)$) where $G(\C_i)$ is introduced previously.)
So the other columns of $G(\C_i)$ are linear combinations of  the columns in the information set.
Using this, we will compute the size of $\mathcal{I}$.

As before,
we identify a binary vector of length 40 with a $4 \times 10$ array.
First suppose that all columns of our $4 \times 10$ arrays in $\mathcal{I}$
are even and first row is even.  Recall that $\mathcal{I}$ has projection $O$ onto $\C_i$. Then each column in the information set can be even eight times out of 16 choices. There are 2 choices to have even parity for each column outside the information set, except the last one which is automatically determined because of top row parity. Hence there are $8^5\times 2^4\times 1=2^{19}$ vectors when all columns and top row are even. For the same reason we obtain $8^5\times 2^4\times 1=2^{19}$ vectors when all columns and top row are odd. Therefore there are $2\times 2^{19}=2^{20}$ vectors, which implies that $\mathcal{I}$ is a $[40,20]$ linear code. By Theorem \ref{binmap} $\rho_B(\C_i)$  gives a doubly-even self-dual $[40,20,8]$ code. Since $\rho_B(\C_i)$ is a binary linear code of dimension 20 satisfying properties (i), (ii), and (iii) of projection $O$ and $\mathcal{I}$ is the largest linear code of dimension 20 satisfying (i), (ii), and (iii) of projection $O$, we have that $\rho_B(\C_i)=\mathcal{I}$ as desired.
\end{proof}

\begin{remark}
Let ${\C}_1=E_{10}$ and ${\C}_2 =B_{10}$.
Let $i=1,2$. Then ${\C}_{40,i}^{SE}$  has projection $E$ onto $C_i$.
\end{remark}
The proof of the remark is analogous to Theorem \ref{correspondence_de}. Considering only even top parity instead even and odd top row parities, we get the same result.

Since our self-dual codes have minimum distance $d=8$, it can correct up to three errors. In what follows, we show that these can be done very quickly.

\section{Decoding a doubly-even (or singly-even) $[40,20,8]$ code}
\label{sec:decoding}

Because of the simple structure of the generator matrix of $E_{10}$, we focus on the decoding of the doubly-even self-dual binary code ${\C}_{40,1}^{DE}$. The singly-even self-dual binary code ${\C}_{40,1}^{SE}$ is decoded similarly. We will represent all the codewords in $E_{10}$ by only 8 vectors together with certain automorphisms of $E_{10}$.

We call positions $(1,2),(3,4),(5,6),(7,8)$, and $(9,10)$ {\em blocks of $E_{10}$}. In \cite{MacOdlSlo} the order $g$ of the monomial group of $E_{10}$ is $3\cdot 2^{\frac{10}{2}-1}(10/2)!=5760$ with generators $(12)(34),~(13)(24),$ $(13579)(2468~10)$.  It implies that this representation $E_{10}$ is invariant under all permutations of these five blocks and under all even numbers of interchanges within the blocks. So  we can partition all $4^5-1=1023$ vectors of $E_{10}$, except the zero vector,  into  the eight vectors in Table \ref{tab:t1} up to those invariant operations. The following shows how we applied those invariant operations to vectors in $E_{10}$.

\begin{itemize}
\item[(i)] The number of the first type codewords is $10 \times 3$ considering $\frac{5!}{2!3!}=10$ block permutations and scalar multiplication of $\{1, \om, \ob\}$. It is meaningless to consider the interchanges within blocks since the components in a block are identitcal.

\item[(ii)] The number of the second type codewords is $80 \times 3$. Consideirng only block permutations there are $\frac{5!}{4!}=5$ cases. Considering interchanges within 2 blocks there are $\frac{5!}{2!2!}=30$ cases for the type ($01~01~10~10~\om\ob$)  and  $\frac{5!}{3!}=20$ cases for the type ($01~10~10~10~\om\ob$).  Considering interchanges within 4 blocks there are  $\frac{5!}{4!}=5$ cases for the type ($01~01~01~01~\om\ob$) and  $\frac{5!}{3!}=20$ cases for the type ($01~01~01~10~\ob\om$). Considering the scalar multiplication and  the total number of these cases, we get our result.

\item[(iii)] The number of the third type codewords is $\frac{5!}{2!}=60$. For this type we only need to consider block permutations since a codeword with scalar multiplication matches to one of the codewords with block permutations. It is meaningless to consider the interchanges within blocks since the components in a block are identitcal.

\item[(iv)] The number of the fourth type codewords is $5\times 3$ considering $\frac{5!}{4!}=5$ block permutations and scalar multiplications.  It is meaningless to consider the interchanges within blocks since the components in a block are identitcal.

\item[(v)] The number of the fifth type codewords is $30\times 3$  considering $\frac{5!}{2!2!}=30$ block permutations and scalar multiplications.  It is meaningless to consider the interchanges within blocks since the components in a block are identitcal.

\item[(vi)] The number of the sixth type codewords is $160\times 3$. Considering only block permutations there are $\frac{5!}{2!2!}=30$ cases. Considering interchanges within 2 blocks there are $\frac{5!}{3!2!}=10$ cases for the type ($\om\ob~\om\ob~10~10~\om\ob$),  $\frac{5!}{2!}=60$ cases for the type ($\om\ob~\ob\om~01~10~\om\ob$), $\frac{5!}{2!2!}=30$ cases for the type ($\ob\om~\ob\om~01~01~\om\ob$), and  $\frac{5!}{3!}=20$ cases for the type ($\ob\om~\ob\om~01~10~\ob\om$).  Considering interchanges within 4 blocks there are  $\frac{5!}{3!2!}=10$ cases for the type ($\om\ob~\om\ob~01~01~\om\ob$). Considering the scalar multiplication and the total number of these cases, we get our result.

\item[(vii)] The number of the seventh type codewords is $16 \times 3$. Considering only block permutations there are $\frac{5!}{2!3!}=10$ cases. Considering interchanges within 2 blocks there are $\frac{5!}{5!}=1$ case for the type ($\om\ob~\om\ob~\om\ob~\om\ob~\om\ob$)  and  $\frac{5!}{4!}=5$ cases for the type ($\ob\om~\ob\om~\ob\om~\ob\om~\om\ob$). Considering the scalar multiplication and  the total number of these cases, we get our result.

\item[(viii)] The number of the eighth type codewords is $20\times 3$ considering $\frac{5!}{3!}=20$ block permutations and scalar multiplications.  It is meaningless to consider the interchanges within blocks since the components in a block are identical.

\end{itemize}

As an example,
if there is a codeword ($0\ob~\om 1~\om 1~\ob 0~\om 1$), it is easy to find its type  since  its weight is 8 and there are 3 distinct nonzero components. It tells us that this is the sixth type of Table \ref{tab:t1}.
Using this table we can easily check and correct at most three errors and there are many other examples for this in Sec. 4.3.


\begin{table}[!h]
\centering
\caption{All the types of codewords for $E_{10}$}
\label{tab:t1}
\begin{tabular}{cllllllllllll}
\hline
No. & \multicolumn{10}{l}{Type of codewords for $E_{10}$ } & Number of this type & Weight\\
\hline
(i)& 1&1&1&1&0&0&0&0&0&0& $10 \times 3$ &4\\
(ii)& 1&0&1&0&1&0&1&0&$\omega$&$\overline{\omega}$&  $80 \times 3$ &6\\
(iii)& $\omega$&$\omega$&$\overline{\omega}$&$\overline{\omega}$&1&1&0&0&0&0& 60& 6\\
(iv)& 1&1&1&1&1&1&1&1&0&0&  $5\times 3$ &8\\
(v)& 1&1&1&1&$\omega$&$\omega$&$\omega$&$\omega$&0&0&  $30\times 3$&8\\
(vi)& $\overline{\omega}$&$\omega$&$\overline{\omega}$&$\omega$&1&0&1&0&$\omega$&$\overline{\omega}$& $160 \times 3$ & 8\\
(vii)& $\overline{\omega}$&$\omega$&$\overline{\omega}$&$\omega$&$\omega$&$\overline{\omega}$&$\omega$&$\overline{\omega}$&$\omega$&$\overline{\omega}$&  $16\times 3$& 10\\
(viii)& 1&1&1&1&1&1&$\overline{\omega}$&$\overline{\omega}$&$\omega$&$\omega$&  $20 \times 3$&10\\
\hline
\end{tabular}
\end{table}

In what follows, we introduce a lemma in order to describe how many errors and erasures in $E_{10}$ over $GF(4)$ can be corrected. By an erasure, we mean that we know an error position but we do not know a correct value.

\begin{lemma}\label{lemnu}{\rm(\cite[p. 45]{HufPle})}
Let $\C$ be an $[n,k,d]$ code over $GF(q)$. If a codeword ${\bf c}$ is sent and ${\bf y}$ is received where $\upsilon$ errors and $\epsilon$ erasures have occurred, then ${\bf c}$ is the unique codeword in $\C$ closest to ${\bf y}$ provided $2\upsilon  + \epsilon< d$.
\end{lemma}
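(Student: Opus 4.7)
The plan is to carry out the standard triangle-inequality argument adapted to the error/erasure model. I would first fix notation: let $E\subseteq\{1,\dots,n\}$ denote the set of $\epsilon$ erased coordinates and $F$ its complement, and write $\pi_F$ for projection onto the coordinates in $F$. Under the assumption that ${\bf y}$ was obtained from ${\bf c}$ by $\upsilon$ errors and $\epsilon$ erasures, the key relation to record is $d_H(\pi_F({\bf y}),\pi_F({\bf c}))=\upsilon$, because by definition the errors occur only among the non-erased positions.

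Next I would formalize ``closest to ${\bf y}$'' in the natural way: a codeword ${\bf c}'\in\C$ is at least as close to ${\bf y}$ as ${\bf c}$ precisely when $d_H(\pi_F({\bf c}'),\pi_F({\bf y}))\leq\upsilon$, since erased positions cannot be compared and thus contribute nothing to the distance. Suppose such a ${\bf c}'$ exists with ${\bf c}'\neq{\bf c}$; the goal is to derive a contradiction. The triangle inequality on $GF(q)^{|F|}$ then gives
\[
d_H(\pi_F({\bf c}),\pi_F({\bf c}'))\leq d_H(\pi_F({\bf c}),\pi_F({\bf y}))+d_H(\pi_F({\bf y}),\pi_F({\bf c}'))\leq 2\upsilon.
\]
On the other hand ${\bf c}-{\bf c}'$ is a nonzero codeword of $\C$, so $\mathrm{wt}({\bf c}-{\bf c}')\geq d$; splitting its support into the parts lying in $F$ and in $E$ and bounding the latter by $|E|=\epsilon$ yields $d\leq 2\upsilon+\epsilon$, contradicting the hypothesis $2\upsilon+\epsilon<d$.

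The only subtle point, and hence the main potential obstacle, is pinning down the right notion of ``closeness'' in the presence of erasures; once one agrees that erased coordinates are simply ignored in the distance comparison, the argument reduces to the triangle inequality on the punctured code together with the weight decomposition $\mathrm{wt}(x)=\mathrm{wt}(\pi_F(x))+\mathrm{wt}(\pi_E(x))$. I foresee no further difficulty, and the strictness of the inequality $2\upsilon+\epsilon<d$ is exactly what is needed to rule out ties and obtain uniqueness rather than mere existence.
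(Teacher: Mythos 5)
Your argument is correct and is the standard erasure/error decoding proof (triangle inequality on the non-erased coordinates plus the weight split $\mathrm{wt}(x)=\mathrm{wt}(\pi_F(x))+\mathrm{wt}(\pi_E(x))$); note that the paper itself gives no proof of this lemma, citing it directly from Huffman--Pless, where essentially this same argument appears.
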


\begin{table}[!h]
\centering
\caption{All the possible cases of errors that can occur up to three in a vector of  $C_{40,1}^{DE}$ }
\label{tab:tt2}
\begin{tabular}{cll}
\hline
\begin{tabular}{@{}c@{}}The number of errors \\that can occur in $\mathbf{v}\in C_{40,1}^{DE}$\end{tabular}
& \begin{tabular}{@{}c@{}}Parity of \\columns \end{tabular}
& \begin{tabular}{@{}c@{}}The number of errors  \\ in each parity $(x;\bar{y})$\end{tabular}\\
\hline
0 &$[10; \bar{0}]$ & (0;$\bar{0}$)\\
\hline
1 &$[9; \bar{1}]$  & $(0;\bar{1}$) \\
\hline
2  &$[10; \bar{0}]$ &  (2;$\bar{0}$) \\
& $[8; \bar{2}]$  & $(0;\bar{1},\bar{1})$ \\
\hline
3 & $[9; \bar{1}]$  & $(0;\bar{3}$)  \\
 &  $[9; \bar{1}]$  & $(2;\bar{1}$) \\
 &  $[7; \bar{3}]$  & (0;$\bar{1},\bar{1},\bar{1}$) \\
\hline
\end{tabular}
\end{table}

In Sec. 3 we showed how we arranged the binary vectors of length 40
 in columns to do the projection onto $GF(4)^{10}$. Table 2 shows all the possible cases of errors up to ``three'' that we can have  for
binary vectors of length 40.  There is no other case that we can decode to a unique codeword in $C_{40,1}^{DE}$ than the cases listed in Table \ref{tab:tt2}. The notation $[x;\bar{y}]$ we used in the second column of Table 2  denotes that $x$ columns have the same parity and $y$ columns the other and  may assume that  $x \ge y$.  In many cases we can detect errors by checking the column parities since our $[40,20,8]$ code can correct up to three errors. In some cases, however,  column parities do not give enough information about errors since an even number of errors  in a column do not change the  parity. We considered this kind of errors and their combinations as well.

For a binary  vector with  $[x;\bar{y}]$ column parity, the notation ($x_1,\ldots, x_s$; $\bar{y_1},\ldots,\bar{y_t}$) $(0\le s\le x, 0\le t\le \bar{y})$ denotes $x_1,\ldots,x_s$  errors in $s$ columns, and $y_1,\ldots,y_t$  errors  in $t$ columns.
For   (0;$\bar{0}$), it means that there is no error in any of the columns.  For  (0;$\bar{1}$), it means that there is no error in any of nine columns of one parity but an error in the column of the other  parity. This error is easy to find  since one error changes the column parity.  For  (2;$\bar{0}$), it means that  two errors lie in one of the ten  columns. It is easy to see that  these errors  do not change the column parity. For  (0;$\bar{1},\bar{1}$), it means that there is no error in any of eight columns of one parity but an error   each  in  two columns of the other parity.   For  (2;$\bar{1}$), it means that there are two errors in one of the nine columns of one parity  and an error in the  column with the other parity.  For  (0;$\bar{1},\bar{1},\bar{1}$), it means that there are no error in any of seven columns of one parity but an error each in three columns of  the other  parity.

\begin{table}[!h]
\centering
\caption{All the possible cases of errors that can occur for the parity of columns in Table 2 }
\label{tab:t2}
\begin{tabular}{llllll}
\hline
{\mbox{Case}} & \begin{tabular}{@{}c@{}}Prity of \\ columns\end{tabular} & \begin{tabular}{@{}c@{}}The number of errors \\ in $\mathbf{v}\in C_{40,1}^{DE}$\end{tabular} &\begin{tabular}{@{}c@{}}Correct \\ columns\end{tabular}&
 $(\upsilon,\epsilon)$ in $\mathbf{c}\in E_{10}$ & $2 \upsilon + \epsilon < 4 ?$ \\
\hline
{\mbox{I}}    & $[10; \bar{0}]$ & (i) (0;$\bar{0}$)  errors & 10&(0,0) & Yes \\
 & & (ii) (2;$\bar{0}$)& (9) out of 10 &(1,0) & Yes \\
 & & (iii) (4;$\bar{0}$)&(9) out of 10 & (1,0)& Yes\\
 & & (iv) (2,2;$\bar{0}$)& (8) out of 10 &(2,0) & No\\
 & & (v) (2,2,2;$\bar{0}$) & (7) out of 10 &(3,0)& No\\
\hline
{\mbox{II}}   & $[9; \bar{1}]$  & (i) $ (0;\bar{1}$) & 9 &(0,1) & Yes\\
 & & (ii) $(0;\bar{3}$) & 9&(0,1) & Yes \\
 & & (iii) $(2;\bar{1}$)& (8) out of 9 &(1,1)& Yes\\
 & & (iv)$(2,2;\bar{1}$)& (7) out of 9 &(2,1) & No\\
 & & (v) $(2,2,2;\bar{1}$) & (6) out of 9 &(3,1)& No\\
\hline
{\mbox{III}}  & $[8; \bar{2}]$  & (i) $(0;\bar{1},\bar{1}$) &  8 &(0,2) & Yes \\
 & & (ii) $(0;\bar{1},\bar{3})$& 8&(0,2) & Yes \\
 & & (iii) $(2; \bar{1},\bar{1})$&(7) out of 8& (1,2)& No\\
 & & (iv) $(2;\bar{1},\bar{3}$)& (7) out of 8& (1,2) & No \\
& & (v) $(2, 2;\bar{1},\bar{1}$) & (6) out of 8& (2,2)&No\\
\hline
{\mbox{IV}}   & $[7; \bar{3}]$  & (i) (0;$\bar{1},\bar{1},\bar{1}$) &  7 &(0,3) & Yes\\
 & & (ii) $(0;\bar{1},\bar{1},\bar{3}$)&7& (0,3) & Yes \\
 & & (iii) $(2; \bar{1},\bar{1},\bar{1}$)& (6) out of 7& (1,3)& No\\
 & & (iv) $(2;\bar{1},\bar{3},\bar{3}$)& (6) out of 7&(1,3) & No \\
& & (v) $(2;\bar{1},\bar{1},\bar{3}$)& (6) out of 7&(1,3) & No \\
& & (vi) $(2,2;\bar{1},\bar{1},\bar{1}$) & (5) out of 7&(2,3) & No \\
\hline
\end{tabular}
\end{table}


Table 3 shows all the possible cases of errors that can occur for the  parity of columns in Table \ref{tab:tt2}.  In the second column of Table \ref{tab:t2}, we listed all the column parities in Table \ref{tab:tt2}.   The third column of Table \ref{tab:t2} shows all the possible errors that can occur in each case.  However some situations cannot occur such as 4 errors in one column and 1, 2, or 3 in another (i.e., $(4;\bar{1})$, $(4,2;\bar{0})$, $(4;\bar{3})$, repectively). We can directly check from $G(C_{40,1}^{DE})$ that we can always have weight 8 codewords whose $i$-th and $(i+1)$-st columns consist of all ones for $1\le i < 10$. Thus adding these codewords to those errors would correspond to $(0;\bar{3}), (2;\bar{0}),(0;\bar{1})$ case, respectively, giving a coset leader of weight 3, 2, 1, repectively.
The fourth column shows correct columns. For example, the case II(iii)  (8) out of 9 implies that 8 columns out of nine columns with one parity are correct, even though which 8 columns are correct are not decided.

Now we move to the fifth column of Table \ref{tab:t2}. It represents the possible cases of  $\upsilon$ errors and $\epsilon$ erasures that can occur in codewords for $E_{10}$  using Lemma \ref{lemnu}. When a column parity of a codeword for $C_{40,1}^{DE}$ is changed by (an) error(s), then the corresponding component in the codeword for $E_{10}$ is regarded as an erasure. For the case such that  a column parity is not changed by  errors  the corresponding component in the codeword for $E_{10}$ is regarded as an error.

We can get the fifth column  of Table \ref{tab:t2}  from the third column of Table \ref{tab:t2} just by checking the number of numbers without bar and with bar. If the number of errors in $\mathbf{v}\in C_{40,1}^{DE}$ is (0;$\bar{1},\bar{1},\bar{1}$), then the corresponding $(\upsilon,\epsilon)$ in $\mathbf{c}\in E_{10}$ is $(0,3)$ since zero is without bar and three ones are with bar.  If the number of errors in $\mathbf{v}\in C_{40,1}^{DE}$ is ($2;\bar{1}$), then the corresponding $(\upsilon,\epsilon)$ in $\mathbf{c}\in E_{10}$ is $(1,1)$. The sixth column of Table 3 shows that if at most three errors occurred in $C_{40,1}^{DE}$ then
we can obtain a unique vector in $E_{10}$ since all the cases of such errors  satisfy the inequality in Lemma \ref{lemnu}.

Table \ref{tab:tt2} shows all the cases that are decodable to a unique codeword in $C_{40,1}^{DE}$. Table \ref{tab:t2} shows all the kind of errors that can occur for the cases of column parities in Table \ref{tab:tt2}. Any other cases not on the list of Table \ref{tab:t2} definitely have more than three errors, which is regardless in our case since our $C_{40,1}^{DE}$  code can correct up to three errors. We will introduce two kinds of algorithms as follows, one is representation decoding algorithm and the other syndrome docoding algorithm. When consider for  the cases not  in Table \ref{tab:t2},  we declare  it has more than three errors and not decodable on Step 1 in  both algorithms. Thus both algorithms halt for these cases. For the cases in Table \ref{tab:t2}, when we exit from the loop either we have decoded $\bf{v}$  as a codeword of $C_{40,1}^{DE}$  or  declared that $\bf{v}$ has more than three errors.  Executing the if statement at the end of two  algorithms returns the proper value and this satisfies the output condition. Thus both algorithms halt for the cases in Table \ref{tab:t2}.

\subsection{Representation decoding algorithm}

In this section we describe the representation decoding algorithm.
A main idea of this algorithm is that given a received vector ${\bf v}\in GF(2)^{40}$ written as a $4 \times 10$ binary matrix, we project it to a quaternary vector of length 10 and decode this quaternary vector using $E_{10}$ with the help of Table 1 and then obtain a codeword of $C_{40,1}^{DE}$ with the help of Table \ref{tab:t2}.

\medskip

\noindent
{\bf Input:}  A received
vector ${\bf v}\in GF(2)^{40}$ written as a $4 \times 10$ binary matrix.\\
\noindent
{\bf Output:} Either produce a correct codeword of $C_{40,1}^{DE}$ or say that more than three errors occurred.

\noindent
{\bf Representation decoding algorithm:}

\begin{itemize}

\item[Step 1] Compute the parities of the columns of ${\bf v}$ and determine which
case of Table \ref{tab:t2} we are in.  If no case is found, we say that ``more than three errors occurred'' and exit. Otherwise, go to Step 2-I-(i).

\item[Step 2-I-(i)] If the parity of ${\bf v}$ is the case I, then consider it as a case I-(i).   Else, go to Step 2-II-(i).

Compute the projection of ${\bf v}$ onto $GF(4)^{10}$, call it ${\bf y}$. If there is any type in Table 1 matching ${\bf y}$ after the action of the automorphism groups and scalar multiplication, then  we compute the parity of the top row and columns of ${\bf v}$.    If such ${\bf y'}$ is not found,  then we consider it as a case I-(ii) and go to Step2-I-(ii).

 If the parities of columns are all same and top row parity is same as the column parity, we say that ``no errors have occurred'' and ${\bf v}$ is a ``codeword of $C_{40,1}^{DE}$'' and exit.  Else (i.e., case I-(iii)), we say that ``more than three errors occurred'' and exit.

\item[Step 2-I-(ii)] We use Table 1 to find the closest vector ${\bf y'}$ to ${\bf y}$ by changing only one element of ${\bf y}$ after the action of the automorphism groups and scalar multiplication.

     If  ${\bf y'}$ is found, then use ${\bf y'}$ to correct   ${\bf v}$ by changing two elements of ${\bf v}$ in the column where the errors occurred  according to the condition that all the columns and top row have the same parity.

Else (i.e., case I-(iv), (v)), we say that ``more than three errors occurred'' and exit.

After this,  we say ${\bf v}$ is decoded as a ``codeword of $C_{40,1}^{DE}$'' and exit.

\item[Step 2-II-(i)] If  the parity of ${\bf v}$ is the case II, then consider it as a case II-(i).  Else, go to Step 2-III-(i).

Compute the projection of ${\bf v}$ onto $GF(4)^{10}$, call it ${\bf y}$. Since we know the error column position, it is easy to find   the closest vector ${\bf y'}$ to ${\bf y}$  in Table 1  by changing only one element of ${\bf y}$ after the action of the automorphism groups and scalar multiplication.  If such ${\bf y'}$ is not found (i.e., case II-(iii), (iv), (v)), go to Step II-(iii).

We use ${\bf y'}$ to correct ${\bf v}$ by changing  one element in the  column where the error
occurred  according to the condition that all the columns and top row have the same parity.
 Then we say ${\bf v}$ is decoded as a ``codeword of $C_{40,1}^{DE}$'' and exit. Else,   go to Step 2-II-(ii).

 \item[Step 2-II-(ii)]   Now we use ${\bf y'}$ to correct ${\bf v}$ by changing  three elements in the  column where the errors
occurred  according to the condition that all the columns and top row have the same parity. Then  we say  ${\bf v}$ is decoded as a ``codeword of $C_{40,1}^{DE}$'' and exit.

 \item[Step 2-II-(iii)]  We use Table 1 to find the closest vector ${\bf y'}$ to ${\bf y}$  by changing two elements of ${\bf y}$ after the action of the automorphism groups and scalar multiplication.

     If  ${\bf y'}$ is found, then we use ${\bf y'}$ to correct ${\bf v}$ by changing  two elements in two   columns where the errors
occurred  according to the condition that all the columns and top row have the same parity.   Else (case II-(iv), (v)), we say that ``more than three errors occurred'' and exit.

After this,  we say  ${\bf v}$ is decoded as a ``codeword of $C_{40,1}^{DE}$'' and exit.

\item[Step 2-III-(i)] If the parity of ${\bf v}$ is the case III, then consider it as a case III-(i).  Else, go to Step 2-IV-(i).

Compute the projection of ${\bf v}$ onto $GF(4)^{10}$, call it ${\bf y}$. Since we know the error column positions, it is easy to find   the closest vector ${\bf y'}$ to ${\bf y}$  in Table 1  by changing two elements of ${\bf y}$ after the action of the automorphism groups and scalar multiplication. If such ${\bf y'}$ is not found (i.e., case III-(iii), (iv), (v)), we say that ``more than three errors occurred'' and exit.

We use ${\bf y'}$ to correct ${\bf v}$ by changing  two elements in two  columns where the errors
occurred  according to the condition that all the columns and top row have the same parity.
 Then  we say  ${\bf v}$ is decoded as a ``codeword of $C_{40,1}^{DE}$'' and  exit. Else, we are in (case III-(ii)), so we say that ``more than three errors occurred'' and exit.

\item[Step 2-IV-(i)] If the parity of ${\bf v}$ is the case IV, then consider it as a case IV-(i).  Compute the projection of ${\bf v}$ onto $GF(4)^{10}$, call it ${\bf y}$. Since we know the error column positions, it is easy to find   the closest vector ${\bf y'}$ to ${\bf y}$  in Table 1  by changing three elements of ${\bf y}$ after the action of the automorphism groups and scalar multiplication. If such ${\bf y'}$ is not found (i.e., case IV-(iii), (iv), (v), (vi)), we say that ``more than three errors occurred'' and exit.

We use ${\bf y'}$ to correct ${\bf v}$ by changing  three elements in three  columns where the errors
occurred  according to the condition that all the columns and top row have the same parity.
 Then we say  ${\bf v}$ is decoded as a ``codeword of $C_{40,1}^{DE}$'' and exit. Else, we are in (case IV-(ii)), so we say that ``more than three errors occurred'' and exit.

\end{itemize}

The following columns yield the elements in $GF(4)$.
\begin{displaymath}
0 \in GF(4) \Rightarrow \left( \begin{array}{c}
0\\
0\\
0\\
0
\end{array}\right)
\left( \begin{array}{c}
1\\
1\\
1\\
1\\
\end{array}\right)
\left( \begin{array}{c}
1\\
0\\
0\\
0\\
\end{array}\right)
\left( \begin{array}{c}
0\\
1\\
1\\
1\\
\end{array}\right),
~~~~ 1\in GF(4) \Rightarrow \left( \begin{array}{c}
1\\
1\\
0\\
0\\
\end{array}\right)
\left( \begin{array}{c}
0\\
0\\
1\\
1\\
\end{array}\right)
\left( \begin{array}{c}
0\\
1\\
0\\
0\\
\end{array}\right)
\left( \begin{array}{c}
1\\
0\\
1\\
1\\
\end{array}\right),
\end{displaymath}

 \begin{displaymath}
\omega\in GF(4) \Rightarrow \left( \begin{array}{c}
1\\
0\\
1\\
0
\end{array}\right)
\left( \begin{array}{c}
0\\
1\\
0\\
1\\
\end{array}\right)
\left( \begin{array}{c}
0\\
0\\
1\\
0\\
\end{array}\right)
\left( \begin{array}{c}
1\\
1\\
0\\
1\\
\end{array}\right),
~~~~ \bar{\omega}\in GF(4) \Rightarrow \left( \begin{array}{c}
1\\
0\\
0\\
1\\
\end{array}\right)
\left( \begin{array}{c}
0\\
1\\
1\\
0\\
\end{array}\right)
\left( \begin{array}{c}
0\\
0\\
0\\
1\\
\end{array}\right)
\left( \begin{array}{c}
1\\
1\\
1\\
0\\
\end{array}\right).
\end{displaymath}
As we can see above, we have four choices for each element in $GF(4)$ and two choices for even parity and two for odd.  Thus after  the correct projection $\bf{y'}$  is found, we change elements in columns of  $\bf{v}$ to satisfy $Proj(\bf{v})=\bf{y'}$. Considering the column and top row parity, we are left with only one choice out of four. After changing elements according to the steps in the representation decoding algorithm,  we can say $\bf{v}$ is decoded to a unique vector in  $C_{40,1}^{DE}$.

\subsection{Syndrome decoding algorithm}

In this section, we give a syndrome decoding algorithm based on $E_{10}$.
A main idea of this algorithm is that given a received vector ${\bf v}\in GF(2)^{40}$ written as a $4 \times 10$ binary matrix, we project it to a quaternary vector of length 10 and decode this quaternary vector using the syndrome decoding of $E_{10}$ and then obtain a codeword of $C_{40,1}^{DE}$ with the help of Table \ref{tab:t2}.

Let $H$ be a parity check matrix of $E_{10}$ consisting of the first five rows of $G(E_{10})$ in Section 3.  We use the first five rows of $H=G(E_{10})$ since we are considering it as a parity check matrix of a linear self-dual code over $GF(4)$.

\medskip

\noindent
{\bf Input:} A vector ${\bf v}\in GF(2)^{40}$ as a $4 \times 10$ binary matrix.

\noindent
{\bf Output:} Either produce a correct codeword of $C_{40,1}^{DE}$ or say that more than three errors occurred.

\noindent
{\bf Syndrome decoding algorithm algorithm:}

\begin{itemize}

\item[Step 1] Compute the parities of the columns of ${\bf v}$ and determine which
case of Table \ref{tab:t2} we are in.  If no case is found, we say that ``more than three errors occurred'' and exit. Otherwise, go to Step  2-I-(i).

\item[Step 2-I-(i)] If the parity of ${\bf v}$ is the case I, then consider it as a case I-(i).  Else, go to Step 2-II-(i).

Compute the projection of ${\bf v}$ onto $GF(4)^{10}$, call it ${\bf y}$. If the syndrome of ${\bf y}$ is zero, then we compute the parity of the top row and columns of ${\bf v}$. Else, go to Step2-I-(ii). When we get the syndrome of ${\bf y}$, we use  $H$ as a  parity check matrix  of $E_{10}$.

 If the parities of columns are all same and top row parity is the same as the column parity,
we say that ``no errors have occurred'' and exit. Else (i.e., case I-(iii)), we say that ``more than three errors occurred'' and exit.

\item[Step 2-I-(ii)] If the syndrome is a  scalar multiple $k$ of an $i$-th column of the parity check matrix $H$, then the $i$-th coordinate of the error vector ${\bf e}$ is $k$ and elsewhere zeros.   Else (i.e., case I- (iv), (v)), we say that ``more than three errors occurred'' and exit.

By adding ${\bf e}$ to ${\bf y}$, we get the correct projection ${\bf y'}$.  Now we use ${\bf y'}$ to correct ${\bf v}$ by changing  two elements in the  column where the errors
occurred  according to the condition that all the columns and top row have the same parity. Then we say ${\bf v}$ is decoded as a ``codeword of $C_{40,1}^{DE}$'', then exit.

\item[Step 2-II-(i)] If the parity of ${\bf v}$ is the case II, then consider it as a case II-(i).   Else, go to Step 2-III-(i).

Compute the projection of ${\bf v}$ onto $GF(4)^{10}$, call it ${\bf y}$.  If the syndrome is a  scalar multiple $k$ of an $i$-th column of $H$, then the $i$-th coordinate of the error vector ${\bf e}$ is $k$ and elsewhere zeros. Else, go to Step 2-II-(iii).

By adding ${\bf e}$ to ${\bf y}$, we get the correct projection ${\bf y'}$.  We use ${\bf y'}$ to correct ${\bf v}$ by changing  one element in the  column where the error
occurred  according to the condition that all the columns and top row have the same parity.
 Now we say  ${\bf v}$ is decoded as a ``codeword of $C_{40,1}^{DE}$'', then exit. Else, go to Step 2-II-(ii).

 \item[Step 2-II-(ii)]   We use ${\bf y'}$ to correct ${\bf v}$ by changing  three elements in the  column where the errors
occurred  according to the condition that all the columns and top row have the same parity.
 Now we say  ${\bf v}$ is decoded as a ``codeword of $C_{40,1}^{DE}$'', then exit.

 \item[Step 2-II-(iii)]  If the syndrome is written  as a linear combination
of two  columns in $H$, then we can get an error vector and obtain a correct projection  ${\bf y'}$. Else (i.e, case II-(iv), (v)), we say that ``more than three errors occurred'' and exit.

Since we know the correct projection ${\bf y'}$ now,  we use ${\bf y'}$ to correct ${\bf v}$ by changing  three elements in three  columns  where the errors
occurred  according to the condition that all the columns and top row have the same parity. Now we say ${\bf v}$ is decoded as a ``codeword of $C_{40,1}^{DE}$'', then exit.

\item[Step 2-III-(i)] If the parity of ${\bf v}$ is the case III,  consider it as a case III-(i).  Else, go to Step 2-IV-(i).

Compute the projection of ${\bf v}$ onto $GF(4)^{10}$, call it ${\bf y}$.  If the syndrome is written  as a linear combination of two  columns in  $H$, then we can get an error vector and obtain a correct projection  ${\bf y'}$. Else (i.e., case III-(iii), (iv), (v)), we say that ``more than three errors occurred'' and exit.

We use ${\bf y'}$ to correct ${\bf v}$ by changing  two elements in two  columns  where the errors
occurred  according to the condition that all the columns and top row have the same parity.
 Now we say  ${\bf v}$ is decoded as a ``codeword of $C_{40,1}^{DE}$'', then exit. Else (i.e, case III-(ii)), we say that ``more than three errors occurred'' and exit.

\item[Step 2-IV-(i)] If the parity of ${\bf v}$ is the case IV, then consider it as a case IV-(i). Compute the projection of ${\bf v}$ onto $GF(4)^{10}$, call it ${\bf y}$.  If the syndrome is written  as a combination of three  columns in  $H$, then we can get an error vector and obtain a correct projection  ${\bf y'}$. Else  (i.e., case IV-(iii), (iv), (v), (vi)),  we say that ``more than three errors occurred'' and exit.

We use ${\bf y'}$ to correct ${\bf v}$ by changing  three elements in three  columns where the errors
occurred  according to the condition that all the columns and top row have the same parity.
 Now we say   ${\bf v}$ is decoded as a ``codeword of $C_{40,1}^{DE}$'', then exit. Else  (i.e., case IV-(ii)), we say that ``more than three errors occurred'' and exit.
\end{itemize}


To see this algorithm more clearly we give several examples.

\subsection{Examples}
\label{subsec:example}

The steps in these examples are the ones in
Section 4.2.

\begin{example}[Case I]
\label{CaseI} All the columns have the same parity.\\

\[
{\setlength{\arraycolsep}{1.785pt}
\renewcommand{\arraystretch}{.5}
{\bf v} = \begin{array}{cccccccccccc}
        &  &  1&  2&  3&  4&  5&  6&  7&  8&  9& 10\\ \hline
       0&  &  0&  1&  1&  0&  1&  1&  1&  1&  1&  0\\
       1&  &  1&  0&  0&  1&  0&  0&  0&  0&  1&  0 \\
     \om&  &  0&  0&  1&  1&  1&  0&  0&  1&  0&  1 \\
     \ob&  &  0&  0&  1&  1&  1&  0&  0&  1&  1&  0 \\
      \hline
 {\bf y}&  &  1&  0&  1&  0&  1&  0&  0&  1&\om&\om \\
      \end{array}
}\]

{\em  
{\bf{(a) Representation Decoding }}
\begin{itemize}
\item[Step 1] Compute the parities of the columns of ${\bf v}$ and determine which
case of Table \ref{tab:t2} we are in. Since all the columns have one parity, this is the case I.  Thus  we move to the Step 2-I-(i).

\item[Step 2-I-(i)]  Compute the projection of ${\bf v}$ onto $GF(4)^{10}$, call it ${\bf y}$. Note that there is no type matching ${\bf y}$ in Table 1 after the action of the automorphism groups and scalar multiplication since ${\bf y}$ has four zeros in four
blocks each and the same non-zero elements of $GF(4)$ in a block. Thus we move to Step 2-I-(ii).

\item[Step 2-I-(ii)] We use Table 1 to find the closest vector ${\bf y'}$ to ${\bf y}$  by changing only one element of ${\bf y}$ after the action of the automorphism groups and scalar multiplication. We can find a unique $E_{10}$ codeword closest to ${\bf y}$ of weight 6 by correcting only one element, i.e.
 $(1,0,1,0,1,0,0,1,\mbox{\boldmath \ob},\om)$ (see Type 2 in Table \ref{tab:t1}).

Now we use ${\bf y'}$ to correct   ${\bf v}$ by changing two elements of ${\bf v}$ in the column
where the errors occurred  according to the condition that all the columns and top row have the same parity.

Finally ${\bf v}$ is decoded as a ``codeword of $C_{40,1}^{DE}$''.  We uniquely decode ${\bf v}$ as follows:

\[
{\setlength{\arraycolsep}{1.785pt}
\renewcommand{\arraystretch}{.5}
{\bf v} = \begin{array}{cccccccccccc}
        &  &  1&  2&  3&  4&  5&  6&  7&  8&  9& 10\\ \hline
       0&  &  0&  1&  1&  0&  1&  1&  1&  1&  1&  0\\
       1&  &  1&  0&  0&  1&  0&  0&  0&  0&  1&  0 \\
     \om&  &  0&  0&  1&  1&  1&  0&  0&  1&  {\bf 1}&  1 \\
     \ob&  &  0&  0&  1&  1&  1&  0&  0&  1&  {\bf 0}&  0 \\
      \hline
 {\bf y'}&  &  1&  0&  1&  0&  1&  0&  0&  1&{\boldmath \ob}&\om \\
      \end{array}.
}\]

Then we exit.

\end{itemize}

{\bf{(b) Syndrome Decoding }}
\begin{itemize}

\item[Step 1] Same as the Step 1 in representation decoding.

\item[Step 2-I-(i)]  Compute the projection of ${\bf v}$ onto $GF(4)^{10}$, call it ${\bf y}$. Since  the syndrome of ${\bf y}$ is not zero as below, we move to the  Step2-I-(ii).

\[
{\setlength{\arraycolsep}{1.785pt}
\renewcommand{\arraystretch}{.5}
H {\bf \overline{y}}^T =  \left[\begin{array}{c}
                          0\\
                          0 \\
                          0 \\
                          1 \\
                          \om
                   \end{array}\right].
}\]

\item[Step 2-I-(ii)] Since the syndrome is the ninth column of $H$ (see $G(E_{10})$  in Sec. 3),  we can get our error vector  ${\bf e} = (0,0,0,0,0,0,0,0,1,0)$ giving a codeword
${\bf y'}={\bf y}+{\bf e}=(1,0,1,0,1,0,0,1,\mbox{\boldmath \ob},\om).$

 Now we can use ${\bf y'}$ to correct   ${\bf v}$ by changing two elements of ${\bf v}$ in the column
where the errors occurred  according to the condition that all the columns and top row have the same parity.

Finally ${\bf v}$ is decoded as a ``codeword of $C_{40,1}^{DE}$''. Then we exit.

\end{itemize}

} 

\end{example}

\begin{example}[Case II]
\label{CaseII} Nine columns have one parity, and one left has the other parity.

\[
{\setlength{\arraycolsep}{1.785pt}
\renewcommand{\arraystretch}{.5}
{\bf v} = \begin{array}{cccccccccccc}
        &  &  1&  2&  3&  4&  5&  6&  7&  8&  9& 10\\ \hline
       0&  &  1&  0&  1&  1&  1&  0&  1&  1&  1&  0\\
       1&  &  0&  1&  1&  0&  0&  0&  0&  0&  1&  0 \\
     \om&  &  1&  1&  1&  1&  1&  1&  1&  0&  0&  0 \\
     \ob&  &  1&  1&  0&  1&  0&  0&  1&  0&  1&  1 \\
      \hline
 {\bf y}&  &  1&  0&  \ob&  1&  \om&  \om&  1&  0&\om&\ob \\
      \end{array}
}\]

{\em
{\bf{(a) Representation Decoding }}
\begin{itemize}

\item[Step 1] Compute the parities of the columns of ${\bf v}$ and determine which
case of Table \ref{tab:t2} we are in. Since nine columns have one parity and one left has the other parity, this is the case II.  Thus  we move to the Step 2-II-(i).

\item[Step 2-II-(i)] If  the parity of ${\bf v}$ is the case II, then consider it as a case II-(i).  Compute the projection of ${\bf v}$ onto $GF(4)^{10}$, call it ${\bf y}$. We cannot find the closest ${\bf y'}$ to ${\bf y}$  in Table 1  by changing only one element of ${\bf y}= (1,0,\ob, 1,-,\om,1,0,\om,\ob)$ after the action of the automorphism groups and scalar multiplication  since it does not match to any of  $E_{10}$ codewords of weight 8  with three distinct nonzero elements .  It implies that one column among nine odd columns of ${\bf y}$ is not correct. Thus we move to the Step 2-II-(iii).

 \item[Step 2-II-(iii)]  We use Table 1 to find the closest vector ${\bf y'}$ to ${\bf y}$  by changing two elements of ${\bf y}$ after the action of the automorphism groups and scalar multiplication and find the unknown error position. In this case, it is the fourth position.  Now we change it to $\om$ (Type 6 of Table \ref{tab:t1}).   Now we use ${\bf y'}$ to correct ${\bf v}$ by changing  three elements in two  columns where the errors
occurred  according to the condition that all the columns and top row have the same parity. Thus ${\bf v}$ is uniquely decoded as a ``codeword of $C_{40,1}^{DE}$''.

\[
{\setlength{\arraycolsep}{1.785pt}
\renewcommand{\arraystretch}{.5}
{\bf v} = \begin{array}{cccccccccccc}
        &  &  1&  2&  3&  4&  5&  6&  7&  8&  9& 10\\ \hline
       0&  &  1&  0&  1&  1&  1&  0&  1&  1&  1&  0\\
       1&  &  0&  1&  1&  {\bf 1}&  {\bf 1}&  0&  0&  0&  1&  0 \\
     \om&  &  1&  1&  1&  {\bf 0}&  1&  1&  1&  0&  0&  0 \\
     \ob&  &  1&  1&  0&  1&  0&  0&  1&  0&  1&  1 \\
      \hline
 {\bf y'}&  &  1&  0&  \ob&  {\boldmath \om}&  {\boldmath \ob}&  \om&  1&  0&\om&\ob \\
      \end{array}
}\]

Then we exit.

\end{itemize}

{\bf{(b) Syndrome Decoding }}
\begin{itemize}

\item[Step 1] Same as the Step 1 in representation decoding.

\item[Step 2-II-(i)] If the parity of ${\bf v}$ is the case II, then consider it as a case II-(i).  Compute the projection of ${\bf v}$ onto $GF(4)^{10}$, call it ${\bf y}$. Since the fifth column of  ${\bf v}$ is the only one with odd parity, we check whether the syndrome $H {\bf \overline{y}}^T$ is the scalar multiple of  fifth column of $H$ or not. In this case,  the syndrome is not a scalar multiple of column of $H$ as below. We let  ${\bf e}=[e_{1},e_{2},\ldots, e_{10}]$ denote the error vector and $\overline{e}_i$  a conjugate of $e_i$.


\[
{\setlength{\arraycolsep}{1.785pt}
\renewcommand{\arraystretch}{.5}
 H {\bf \overline{y}}^T = \left[\begin{array}{c}
                          \om\\
                          \ob \\
                          1 \\
                          0 \\
                          1
                   \end{array}\right]
                 \ne \overline{e}_5 \left[\begin{array}{c}
                          0 \\
                          1 \\
                          1 \\
                          0 \\
                          1
                   \end{array}\right].
}\]

So we move to the Step 2-II-(iii).

 \item[Step 2-II-(iii)]  We consider the syndrome as a linear combination of the fifth column and one of the remaining nine columns. We let ${\bf S_i}=[s_{i_1},s_{i_2},s_{i_3},s_{i_4},s_{i_5}]^T$ denote the $i$-th column of $H$.

\[
{\setlength{\arraycolsep}{1.785pt}
\renewcommand{\arraystretch}{.5}
 H {\bf \overline{y}}^T = \left[\begin{array}{c}
                          \om\\
                          \ob \\
                          1 \\
                          0 \\
                          1
                   \end{array}\right]
                 = \overline{e}_5 \left[\begin{array}{c}
                          0 \\
                          1 \\
                          1 \\
                          0 \\
                          1
                   \end{array}\right]
                 +\overline{e}_i S_i
= H {\bf \overline{e}}^T.
}\]
Since the first entry of the syndrome is $\omega$ and the first entry of the fifth column of $H$ is 0, the first entry of $S_i$ must be non-zero. So $i$ should be one of 1,2,3, or 4. From the fact that  non-zero first entries are all 1, we get $e_i$ as follows:
\begin{equation*}
\begin{split}
\omega &= 0\cdot \bar{e}_5+1\cdot \bar{e}_i,\\
\omega &= \bar{e}_i, \\
\Rightarrow e_i &=\bar{\omega}.\\
\end{split}
\end{equation*}
Now we check the third entries on both sides. Since the third entries of $S_1,~S_2,~S_3,$ and $S_4$ are all zeros, we get
\begin{equation*}
\begin{split}
1 &= 1\cdot \bar{e}_5+ 0\cdot \om,\\
\Rightarrow \bar{e}_5&=1.\\
\end{split}
\end{equation*}
Next we check the second entries.

\begin{equation*}
\begin{split}
\bar{\omega} &= 1\cdot 1+ s_{i_2}\cdot \omega, \\
\Rightarrow s_{i_2}&=1.\\
\end{split}
\end{equation*}

For the last we check the fifth entries.

\begin{equation*}
\begin{split}
1 &= 1\cdot 1+ s_{i_5} \cdot \om,\\
\Rightarrow s_{i_5}&=0.\\
\end{split}
\end{equation*}

Thus  ${\bf S_i}=[s_{i_1},1,s_{i_3},s_{i_4},0]^T$ and this tells us $i=4$, i.e., $S_4 = (1,1,0,0,0)^T$.

Thus
we have the codeword
\ben
\begin{split}
{\bf y'}&={\bf y}+{\bf e}\\
&=(1,0,\ob,1,\om,\om,1,0,\om,\ob)+(0,0,0,\ob,1,0,0,0,0,0)\\
&=(1,0,\ob,\mbox{\boldmath \om},\mbox{\boldmath \ob},\om,1,0,\om,\ob).
\end{split}
\een

Next we use ${\bf y'}$ to correct ${\bf v}$ by changing  three elements in two  columns where the errors
occurred  according to the condition that all the columns and top row have the same parity. Then we exit.

\end{itemize}
}
\end{example}

\begin{example}[Case III]
\label{CaseIII} Eight  columns have one parity and remaining two columns
 have the other parity.

\[
{\setlength{\arraycolsep}{1.785pt}
\renewcommand{\arraystretch}{.5}
{\bf v} = \begin{array}{cccccccccccc}
        &  &  1&  2&  3&  4&  5&  6&  7&  8&  9& 10\\ \hline
       0&  &  1&  1&  1&  0&  0&  1&  1&  1&  1&  0\\
       1&  &  1&  1&  1&  0&  1&  0&  0&  0&  0&  1 \\
     \om&  &  0&  0&  1&  0&  0&  1&  1&  1&  0&  1 \\
     \ob&  &  1&  1&  0&  1&  1&  0&  1&  1&  0&  1 \\
      \hline
 {\bf y}&  &  \om&  \om&\ob&\ob& \om&\om&  1&1&0&  0 \\
      \end{array}
}\]

{\em
{\bf{(a) Representation Decoding }}
\begin{itemize}
\item[Step 1] Compute the parities of the columns of ${\bf v}$ and determine which
case of Table \ref{tab:t2} we are in. Since eight columns have one parity and remaining  two
columns have the other parity, this is the case III.  Thus  we move to the Step 2-III-(i).

\item[Step 2-III-(i)] If the parity of ${\bf v}$ is the case III, then consider it as a case III-(i).  Compute the projection of ${\bf v}$ onto $GF(4)^{10}$, call it ${\bf y}$. Since we know the error column positions, it is easy to find   the closest vector ${\bf y'}$ to ${\bf y}$  in Table 1  by changing two elements of ${\bf y}=(\om,\om,\ob,\ob,-,-,1,1,0,0)$ after the action of the automorphism groups and scalar multiplication.

Since there are three distinct nonzero elements in ${\bf y}$ and each block has identical elements, we cannot find a matching type for weight 8 in Table 1.
Now we try with weight 6 (Type 3 of Table 1).  Finally we find ${\bf y'}=(\om,\om,\ob,\ob,{\bf 0},{\bf 0},1,1,0,0)$ which is the closet vector to ${\bf y}$ after the action of the automorphism groups and scalar multiplication.

We use ${\bf y'}$ to correct ${\bf v}$ by changing  two elements in two  columns where the errors
occurred  according to the condition that all the columns and top row have the same parity
 as follows :

\[
{\setlength{\arraycolsep}{1.785pt}
\renewcommand{\arraystretch}{.5}
{\bf v} = \begin{array}{cccccccccccc}
        &  &  1&  2&  3&  4&  5&  6&  7&  8&  9& 10\\ \hline
       0&  &  1&  1&  1&  0&  0&  1&  1&  1&  1&  0\\
       1&  &  1&  1&  1&  0&  1&  0&  0&  0&  0&  1 \\
     \om&  &  0&  0&  1&  0&  {\bf 1}&  {\bf 0}&  1&  1&  0&  1 \\
     \ob&  &  1&  1&  0&  1&  1&  0&  1&  1&  0&  1 \\
      \hline
 {\bf y'}&  &  \om&  \om&\ob&\ob& {\bf 0}&{\bf 0}&  1&1&0&  0 \\
      \end{array}.
}\]

Now  ${\bf v}$ is decoded as a ``codeword of $C_{40,1}^{DE}$''. We exit.

\end{itemize}

{\bf{(b) Syndrome Decoding }}
\begin{itemize}

\item[Step 1] Same as the Step 1 in representation decoding.

\item[Step 2-III-(i)] If the parity of ${\bf v}$ is the case III $(8,\bar{2})$, then consider it as a case III-(i).  Compute the projection of ${\bf v}$ onto $GF(4)^{10}$, call it ${\bf y}$.  Since we know the fifth and sixth columns of ${\bf v}$ are the only ones with even parity, we know these two columns have an error each. Next we check whether the syndrome is a linear combination of the fifth and sixth columns of $H$ or not. A simple calculation tells us that the syndrome is indeed a linear combination of these two columns of $H$. Thus we get $\overline{e}_5=\ob$ and $\overline{e}_6=\ob$.


\[
{\setlength{\arraycolsep}{1.785pt}
\renewcommand{\arraystretch}{.5}
 H {\bf \overline{y}}^T = \left[\begin{array}{c}
                          0\\
                          0 \\
                          0 \\
                          0 \\
                         \ob
                   \end{array}\right]
                 = \overline{e}_5 \left[\begin{array}{c}
                          0 \\
                          1 \\
                          1 \\
                          0 \\
                          1
                   \end{array}\right]
                 +\overline{e}_6 \left[\begin{array}{c}
                          0 \\
                          1 \\
                          1 \\
                          0 \\
                          0
                   \end{array}\right]=H {\bf \overline{e}}^T.
}\]


So we can get a correct projection
\ben
\begin{split}
{\bf y'}&={\bf y}+{\bf e}\\
&=(\om,\om,\ob,\ob,\om,\om,1,1,0,0)+(0,0,0,0,\om,\om,0,0,0,0)\\
&=(\om,\om,\ob,\ob,{\bf 0},{\bf 0},1,1,0,0).
\end{split}
\een

We use ${\bf y'}$ to correct ${\bf v}$ by changing  two elements in two  columns where the errors
occurred  according to the condition that all the columns and top row have the same parity.  Then we say   ${\bf v}$ is uniquely decoded as a ``codeword of $C_{40,1}^{DE}$'' and  exit.
\end{itemize}

}

\end{example}

\begin{example}[Case IV]
\label{CaseIV} Seven columns have one parity and remaining three
columns have the other parity.

\[
{\setlength{\arraycolsep}{1.785pt}
\renewcommand{\arraystretch}{.5}
{\bf v} = \begin{array}{cccccccccccc}
        &  &  1&  2&  3&  4&  5&  6&  7&  8&  9& 10\\ \hline
       0&  &  1&  1&  1&  1&  1&  1&  1&  1&  1&  1\\
       1&  &  1&  0&  1&  1&  1&  1&  1&  1&  1&  1 \\
     \om&  &  1&  1&  1&  0&  0&  1&  0&  1&  0&  1 \\
     \ob&  &  0&  0&  0&  1&  0&  1&  1&  0&  1&  0 \\
      \hline
 {\bf y}&  &\ob&\om&  \ob&\om&  1&  0& \om&\ob&\om& \ob \\
      \end{array}
}\]

{\em
{\bf{(a) Representation Decoding }}
\begin{itemize}
\item[Step 1] Compute the parities of the columns of ${\bf v}$ and determine which
case of Table \ref{tab:t2} we are in. Since seven columns have one parity and remaining three
columns have the other parity,  this is the case IV.  Thus  we move to the Step 2-IV-(i).

\item[Step 2-IV-(i)] If the parity of ${\bf v}$ is the case IV, then consider it as a case IV-(i).  Compute the projection of ${\bf v}$ onto $GF(4)^{10}$, call it ${\bf y}$. Since we know the error column positions, it is easy to find   the closest vector ${\bf y'}$ to ${\bf y}$  in Table 1  by changing three elements of ${\bf y}=(\ob,-,\ob,\om,-,-,\om,\ob,\om,\ob)$ after the action of the automorphism groups and scalar multiplication.

We see from Table \ref{tab:t1}
that ${\bf y'}=(\ob,\om,\ob,\om,\om,\ob,\om,\ob,\om,\ob)$ (Type 7 of Table \ref{tab:t1}). Hence  we use ${\bf y'}$ to correct ${\bf v}$ by changing  three elements in three  columns where the errors
occurred  according to the condition that all the columns and top row have the same parity as follows:
\[
{\setlength{\arraycolsep}{1.785pt}
\renewcommand{\arraystretch}{.5}
{\bf v} = \begin{array}{cccccccccccc}
        &  &  1&  2&  3&  4&  5&  6&  7&  8&  9& 10\\ \hline
       0&  &  1&  {\bf 0}&  1&  1&  1&  1&  1&  1&  1&  1\\
       1&  &  1&  0&  1&  1&  1&  1&  1&  1&  1&  1 \\
     \om&  &  1&  1&  1&  0&  0&  1&  0&  1&  0&  1 \\
     \ob&  &  0&  0&  0&  1&  {\bf 1}& {\bf 0}&  1&  0&  1&  0 \\
      \hline
 {\bf y'}&  &\ob&\mbox{\boldmath \om}&  \ob&\om& \mbox{\boldmath \om}& \mbox{\boldmath \ob}& \om&\ob&\om& \ob \\
      \end{array}.
}\]

Now ${\bf v}$ is decoded as a ``codeword of $C_{40,1}^{DE}$'' and we exit.

\end{itemize}

{\bf{(b) Syndrome Decoding }}
\begin{itemize}

\item[Step 1] Same as the Step 1 in representation decoding.

\item[Step 2-IV-(i)] If the parity of ${\bf v}$ is the case IV $(7,\bar{3})$, then consider it as a case IV-(i).  Compute the projection of ${\bf v}$ onto $GF(4)^{10}$, call it ${\bf y}$.
Since we know the second, fifth, and sixth columns of ${\bf v}$ are the only ones with even parities, we know these three columns have an error each. Next we check whether the syndrome is a linear combination of the second, fifth, and sixth columns of $H$ or not. A simple calculation tells us that the syndrome is indeed a linear combination of these three columns of $H$.


\[
{\setlength{\arraycolsep}{1.785pt}
\renewcommand{\arraystretch}{.5}
 H {\bf \overline{y}}^T = \left[\begin{array}{c}
                          0\\
                          0\\
                          0\\
                          0\\
                        \om
                   \end{array}\right]
                 = \overline{e}_2 \left[\begin{array}{c}
                          1 \\
                          0 \\
                          0 \\
                          0\\
                         0
                   \end{array}\right]
                 +\overline{e}_5 \left[\begin{array}{c}
                          0 \\
                          1 \\
                          1 \\
                          0\\
                          1
                   \end{array}\right]
                 +\overline{e}_6 \left[\begin{array}{c}
                          0 \\
                          1 \\
                          1 \\
                          0\\
                          0
                   \end{array}\right]=H {\bf \overline{e}}^T.
}\]

We do some calculations from the first entry of the syndrome as follows:

\begin{equation*}
\begin{split}
0 &= \overline{e}_2 \cdot 1+ \overline{e}_5 \cdot 0 +\overline{e}_6\cdot 0,\\
\Rightarrow \overline{e}_2 &=0.\\
\end{split}
\end{equation*}
Now we know $\overline{e}_2 =0$.
Calculating on the fifth entry of the syndrome, we get

\begin{equation*}
\begin{split}
\om &= \overline{e}_5 \cdot 1+ \overline{e}_6\cdot 0, \\
\Rightarrow \overline{e}_5& = \om. \\
\end{split}
\end{equation*}
Now we know $\overline{e}_5 =\om $.
Calculating on the third entry of the syndrome, we get
\begin{equation*}
\begin{split}
0& =  \om  \cdot 1 +\overline{e}_6\cdot 1,\\
\Rightarrow \overline{e}_6 &=\om. \\
\end{split}
\end{equation*}
Now we know $\overline{e}_6 =\om$ as well.

Thus we know the error vector from $\overline{e}_2=0,\overline{e}_5 =\om$, and $\overline{e}_6=\om$  and get a correct projection ${\bf y'}$ as follows:
\ben
\begin{split}
{\bf y'}&={\bf y}+{\bf e}\\
&=(\ob,\om,\ob,\om,1,0,\om,\ob,\om,\ob)+(0,0,0,0,\ob,\ob,0,0,0,0)\\
&=(\ob,\mbox{\boldmath \om},\ob,\om,\mbox{\boldmath \om},\mbox{\boldmath \ob},\om,\ob,\om,\ob).
\end{split}
\een

Now we use ${\bf y'}$ to correct ${\bf v}$ by changing  three elements in three  columns where the errors
occurred  according to the condition that all the columns and top row have the same parity.
Finally  ${\bf v}$ is decoded as a ``codeword of $C_{40,1}^{DE}$'' and we exit.

\end{itemize}

 }
\end{example}



\begin{thebibliography}{99}

\bibitem{sd-38-ext} Aguilar-Melchor C., Gaborit P., Kim J.-L., Sok L., Sol\'{e} P.: Classification of extremal and $s$-extremal binary self-dual codes of length $38$. IEEE Trans. Inform. Theory {\bf 58}, (4) 2253--2262 (2012).

\bibitem{BacGab}
 Bachoc C., Gaborit P.: On extremal additive GF$(4)$-codes of lengths $10$ to $18$. J. Th\'{e}orie Nombres Bordeaux {\bf 12}, 225--72 (2000).

\bibitem{BetHarMun2012} Betsumiya K., Harada M., Munemasa A.:
A complete classification of doubly even self-dual codes of length 40,
Electronic journal of combinatorics {\bf  19}, 3 $\#P18$ (2012).

\bibitem{BouBouHar2012} Bouyuklieva S., Bouyukliev I., Harada M.:
Some extremal self-dual codes and unimodular lattices in dimension 40. Finite Fields and Their Applications {\bf 21}, 67--83 (2013).

\bibitem{BuyYor} Buyuklieva S., Yorgov V.: Singly-even
self-dual codes of length 40. Designs, Codes and
Cryptography {\bf 9}  131--141 (1996).

\bibitem{CRSS} Calderbank, A.R., Rains E.M., Shor P.W., Sloane N.J.A.:
 Quantum error correction via codes over GF(4). IEEE Trans. Inform. Theory
 {\bf 44}, 1369--1387 (1998).


\bibitem{ConSlo} Conway J.H., Sloane N.J.A.: A new upper
bound on the minimal distance of self-dual code.
IEEE Trans. Inform. Theory {\bf 36}, 1319-1333 (1990).

\bibitem{DanPar}
Danielsen L.E., Parker M.G.: On the classification of all self-dual additive codes over GF(4) of
length up to 12. J. of Combin. Theory, Ser. A {\bf 113}, (7) 1351--1367 (2006).

\bibitem{EsmGulKha} Esmaeili M., Gulliver T.A., Khandani A.K.:
On the Pless-construction and ML decoding of the
quadratic residue code. IEEE Trans. Inform. Theory {\bf 49} (6) 1527--1535 (2003).

\bibitem{All} Gaborit P., Huffman W. C., Kim J.-L., Pless V: On
the classification of extremal additive codes over $GF(4)$,
Proceedings of the 37th Allerton
Conference on Communication, Control, and Computing, UIUC  535--544 (1999).


\bibitem{Gab}  Gaborit P., Huffman W. C., Kim J.-L., Pless V: On
additive $GF(4)$ codes. Proceedings of the DIMACS
Workshop on Codes and Association Schemes, ed. A. Barg and S. Litsyn, DIMACS Series in Discrete Math. and Theoretical Computer Science, American Mathematical Society, {\bf 56} 135--149 (2001).


\bibitem{Kim} Gaborit P., Kim J.-L., Pless V.: Decoding
binary $R(2,5)$ by hand. Discrete Math. {\bf 264}, 55--73 (2003).

\bibitem{HarMun2006}
Harada M., Munemasa A.: Some restrictions on weight enumerators
of singly even self-dual codes. IEEE Trans. Inform. Theory {\bf 52}
1266--1269 (2006).


\bibitem{hohn} H\"{o}hn G.: Self-dual codes over the Kleinian four group. Mathematische Annalen {\bf 327}, 227-255 (2003).

\bibitem{Huf2005}
Huffman W.C.: On the classification and enumeration of self-dual codes.
Finite Fields and Their Applications, {\bf 11} 451--490 (2005).


\bibitem{HufPle} Huffman W.C., Pless V.:
Fundamentals of Error-Correcting Codes, Cambridge University Press, 2003.

\bibitem{JoyKim} Joyner D., Kim J.-L.: Selected Unsolved Problems in Coding Theory, Birkh\"{a}user, Boston (2011).

\bibitem{KimK} Kim, J.-L., Keith E. M., Vera P.: Projections of binary linear codes onto larger fields, SIAM Journal on Discrete Mathematics 16.4  591-603(2003).

\bibitem{MacOdlSlo} MacWilliams F.J., Odlyzko A.M., Sloane N.J.A.:
Self-Dual Codes over $GF(4)$. J. Combin. Theory, Ser. A {\bf 25} 288--318 (1978).

\bibitem{PlessSO} Pless V.:
A classification of self-orthogonal codes over GF(2).
Discrete Math. {\bf 3}, 209-246 (1972).

\bibitem{Ple_86} Pless V.: Decoding the Golay codes. IEEE
Trans. Inform. Theory {\bf 32}, 561-567 (1986).

\bibitem{RS} Rains E. M., Sloane N.J.A.:  Self-dual codes
in {\em Handbook of Coding Theory}, ed. V. S. Pless and W. C.
Huffman.  Amsterdam: Elsevier, 177--294 (1998).


\end{thebibliography}
\end{document}